\theoremstyle{plain}
\newlist{properties}{enumerate}{10}
\setlist[properties]{label*=\roman*)}
\crefname{propertiesi}{\text{property}}{\text{properties}}
\Crefname{propertiesi}{\text{Property}}{\text{Properties}}
\newlist{points}{enumerate}{10}
\setlist[points]{label*=\arabic*)}
\crefname{pointsi}{\text{point}}{\text{points}}
\Crefname{pointsi}{\text{Point}}{\text{Points}}
\newcommand{\ubar}[1]{\underaccent{\bar}{#1}}
\numberwithin{equation}{section}
\newtheorem{thm}{Theorem}[]
\theoremstyle{definition}
\newtheorem{prop}{Proposition}[section]
\newtheorem{corollary}{Corollary}[section]
\newtheorem{deff}{Definition}[section]
\newtheorem*{crit*}{Criterion}
\newtheorem{crit}{Criterion}[]
\newtheorem{lemma}{Lemma}[section]
\newtheorem{remark}{Remark}[section]
\renewcommand{\iff}{\Longleftrightarrow}
\newcommand{\lied}{\pounds}
\newcommand{\defeq}{\vcentcolon=}
\newcommand{\defeqs}{\stackrel{\scri}{\vcentcolon=}}
\newcommand{\defeqc}{\stackrel{\Sc}{\vcentcolon=}}
\newcommand{\scri}{\mathscr{J}}
\newcommand{\pt}[2]{\tensor{\hat{#1}}{#2}}
\newcommand{\ctr}[3]{\tensor[{#1}]{#2}{#3}}
\newcommand{\ct}[2]{\tensor{{#1}}{#2}}
\newcommand{\cts}[2]{\tensor{\overline{#1}}{#2}}
\newcommand{\ctd}[2]{\tensor{\dot{#1}}{#2}}
\newcommand{\cscn}[1]{\ubar{{#1}}}
\newcommand{\ctcn}[2]{\tensor{\ubar{{#1}}}{#2}}
\newcommand{\cdcn}[1]{\tensor{\ubar{\D}}{#1}}
\newcommand{\cdt}[1]{\tensor{\overline{\D}}{#1}}
\newcommand{\cdtp}[1]{\tensor{{\overline{\D}'}}{#1}}
\newcommand{\cdto}[1]{\tensor{{\mathring{\overline{\D}}}}{#1}}
\newcommand{\ctcng}[2]{\tensor{\tilde{\ubar{#1}}}{#2}}
\newcommand{\ctg}[2]{\tensor{\tilde{#1}}{#2}}
\newcommand{\ctsg}[2]{\tensor{\tilde{\overline{#1}}}{#2}}
\newcommand{\ctcna}[2]{\tensor{{\acute{\ubar{#1}}}}{#2}}
\newcommand{\cta}[2]{\tensor{\acute{#1}}{#2}}
\newcommand{\ctap}[2]{\tensor{{\acute{#1}'}}{#2}}
\newcommand{\cdcna}[1]{\tensor{\ubar{\acute{\D}}}{#1}}
\newcommand{\ctcnh}[2]{\tensor{\hat{\ubar{#1}}}{#2}}
\newcommand{\cth}[2]{\tensor{\hat{#1}}{#2}}
\newcommand{\cthp}[2]{\tensor{{\hat{#1}'}}{#2}}
\newcommand{\cdcnh}[1]{\tensor{\ubar{\hat{\D}}}{#1}}
\newcommand{\cto}[2]{\tensor{\mathring{#1}}{#2}}
\newcommand{\csS}[1]{\overline{#1}}
\newcommand{\df}[1]{\text{d}#1}
\newcommand{\prn}[1]{\left(#1\right)}
\newcommand{\brkt}[1]{\left[#1\right]}
\newcommand{\cbrkt}[1]{\left\lbrace#1\right\rbrace}
\newcommand{\eqs}{\stackrel{\scri}{=}}
\newcommand{\equp}{\stackrel{\U_{p}}{=}}
\newcommand{\eqsopen}{\stackrel{\Delta}{=}}
\newcommand{\eqc}{\stackrel{\Sc}{=}}
\newcommand{\cd}[1]{\tensor{\nabla}{#1}}
\newcommand{\cds}[1]{\tensor{\overline{\nabla}}{#1}}
\newcommand{\commute}[2]{\left[#1,#2\right]}
\newcommand{\dpart}[1]{\partial_{#1}}
\newcommand{\spacef}{\ }
\newcommand{\C}{\mathcal{C}}
\newcommand{\Ah}{\hat{A}}
\newcommand{\Bh}{\hat{B}}
\newcommand{\Aa}{\acute{A}}
\newcommand{\Ba}{\acute{B}}
\newcommand{\T}{\mathcal{T}}
\newcommand{\D}{\mathcal{D}}
\newcommand{\W}{\mathcal{W}}
\newcommand{\Sc}{\mathcal{S}}
\newcommand{\Pc}{\mathcal{P}}
\renewcommand{\P}{\mathcal{P}}
\newcommand{\I}{\mathcal{I}}
\newcommand{\U}{\mathcal{U}}
\newcommand{\Cn}{\mathcal{C}}
\newcommand{\Scn}{\mathbf{S}}
\newcommand{\ms}[1]{\ct{h}{#1}}
\newcommand{\msg}[1]{\ct{\tilde{h}}{#1}}
\newcommand{\mc}[1]{\ct{q}{#1}}
\newcommand{\mcn}[1]{\ct{\ubar{q}}{#1}}
\newcommand{\mcng}[1]{\ct{\ubar{{\tilde{q}}}}{#1}}
\newcommand{\ma}{\acute{m}}
\newcommand{\mh}{\hat{m}}
\newcolumntype{M}[1]{>{\centering\arraybackslash}m{#1}}
\newcolumntype{N}{@{}m{0pt}@{}}
\newcounter{marginnotecount}[section]
	\title{\Large \textbf{Degrees of freedom of gravitational radiation}\\ \textbf{with positive cosmological constant}}
	\author[]{Francisco Fernández-Álvarez\thanks{francisco.fernandez@ehu.eus}\ }
	\affil[]{Departamento de Física\\ Universidad del País Vasco UPV/EHU\\ Apartado 644, 48080 Bilbao, Spain\\
	\ \\
	EHU Quantum Center, Universidad del País Vasco UPV/EHU}
	\date{\today{}}
\begin{document}
	
	\maketitle

	\begin{abstract}
	Results on the isolation of the radiative degrees of freedom of the gravitational field with a positive cosmological constant in full General Relativity are put forward.	Methods employed in a recent geometric characterisation of gravitational radiation are used and, inspired by Ashtekar's work on asymptotically flat space-times, a space of connections is defined. Ground differences emerge due to the space-like character of the conformal boundary, and one has to put into play a fundamental result by Friedrich concerning the initial value problem for space-times with a positive cosmological constant. Based on this, half of the radiative degrees of freedom are identified; remarkably, they utterly determine the gravitational radiation content for space-times with algebraically special rescaled Weyl tensor at infinity. Directions for defining the phase space in the general case are proposed. 
	\end{abstract}

	%\listoftheorems
%	\newpage
	\tableofcontents
%	\listoffigures
%%%%%%%%%%%%%%%%%%%%%%%%%%%%%%%%%%%%%%%%%%%%%%%%%%%%%%%%%%%%%%%%%%%%%%%%	
%%%%%%%%%%%%%%%%%%%%%%%%%%%%%%%%%%%%%%%%%%%%%%%%%%%%%%%%%%%%%%%%%%%%%%%%
%	\newpage
	\section{Introduction}\label{sec:introduction}
		Gravitational radiation is a non-linear feature of the full theory of General Relativity. Theoretical developments in the study of this phenomenon trace back to Einstein's quadrupole formula \cite{Einstein18}, followed by great achievements  in the non-linear regime during the 50s \cite{Pirani57,Bel1958,Trautman58} and 60s \cite{Sachs1962,Bondi1962,Newman62,Winicour1966}. However, it is not until the 70s when a  robust geometrical understanding was available thanks to the work of Geroch \cite{Geroch1977}. The perspective adopted there does not depend on the choice of coordinates nor any other gauge and is built using the tools of the conformal compactification by Penrose \cite{Penrose65}. Arguably, this approach is better suited to address underlying conceptual questions. As an example, Ashtekar's way of quantising the asymptotic gravitational field \cite{Ashtekar1981b} was grounded on this geometrical formulation, upon which the radiative degrees of freedom at infinity were determined \cite{Ashtekar81} and symplectic methods were developed \cite{Ashtekar1981c}. \\
		
		It is important to point out that most of these advances are only applicable when the cosmological constant vanishes and, while detection of gravitational waves \cite{LigoVirgoPRL} urges us to study in detail this aspect of gravity, evidence of a positive cosmological constant \cite{Riess1998,Perlmutter1999} in our universe requires new progress in the theory ---this issue was already exposed by Penrose in \cite{Penrose2011}. A new geometrical description of gravitational radiation at infinity in full General Relativity, that applies to the case with non-negative cosmological constant, is now available  \cite{Fernandez-AlvarezSenovilla2022b} ---see also references therein and \cite{Senovilla2022, Szabados2019} for review on other approaches too. Some of the open questions to be addressed include: the identification of a phase space of radiative modes; an abstract formulation of the characteristic problem in conformal space-time that could be connected with the presence of gravitational waves at infinity according to the new characterisation; a general formulation of energy-momentum and angular-momentum, as well as a mass-loss formula describing the energy carried by gravitational waves that arrive at infinity. This paper focuses on the first of these issues.\\
		
		In the first part of the work, a class of differential operators associated to triads of vector fields is introduced on an abstract three-dimensional Riemannian manifold. The space of all such operators is shown to have the structure of an affine space where each point is labelled by two functions. Also, it is proved that for $ C^\infty $ metrics the Levi-Civita connection belongs to that space and is determined by a set of three families of two-dimensional connections associated to a triad of vector fields. In the second part, the conformal boundary of space-times with positive cosmological constant is considered. Based on  results by Friedrich \cite{Friedrich1986a,Friedrich1986b}  and using the new characterisation of gravitational radiation in full General Relativity, it is argued how the two `coordinates' of the space of differential operators correspond to half of the gravitational radiative degrees of freedom.  Remarkably, for space-times having an algebraically special rescaled Weyl tensor at infinity, the gravitational radiation is fully determined by these two degrees of freedom. For the general case, it is suggested how another pair has to come from the TT-tensor in the conformal initial data. Some examples of exact solutions are given and, also, a comparison with Ashtekar's radiative phase space in asymptotically flat space-times is presented. 
		
	%%%%%%%%%%%%%%%%%%%%%%%%%%%%%%%%%%%%%%%%%%%%%%%%%%%%%%%%%%%%%%%%%%%%%%%%		

	\section{Two-dimensional connections in three-dimensional Riemannian manifolds}\label{sec:geometric-results}
	
	In the next two sections, an abstract 3-dimensional Riemannian manifold $ \I $ with metric $ \ms{_{ab}} $ will be considered. Let $ \C $ denote a congruence of curves on $ \I $ given by a unit vector field $ \ct{m}{^{a}} $ and parameter $ v $ that can be viewed as a function of the coordinates ---a specific coordinate system will not be introduced though. The quotient $ \Scn\defeq\I/\Cn $ is a two-dimensional differentiable\footnote{It can be the case that $ \Scn $ is not globally differentiable.} manifold, not Riemannian in general because it is not endowed with a standard metric. This $ \Scn $ is called the `projected  surface' ---the reader is referred to appendix A in \cite{Fernandez-AlvarezSenovilla2022b} for a detailed formulation. Let $ \cbrkt{\ctcn{E}{^a_{A}}} $ and $ \cbrkt{\ctcn{W}{_{a}^A}} $ be a couple of linearly independent vector fields and forms on $ \I $, orthogonal to $ \ct{m}{_{a}} $ and $ \ct{m}{^a} $ and, hence constituting a pair of dual bases on $ \Scn $, where indices $ A, B, C...=2,3 $. Using these fields, the projector to the space orthogonal to $ \ct{m}{_{a}} $ is constructed as\footnote{Underlining of tensors indicates that they are orthogonal to some vector field giving a congruence; in this case, $ \ct{m}{^a} $.}
		\begin{equation}\label{eq:projector-foliation}
			\ctcn{P}{^a_b}\defeq \ctcn{E}{^a_C}\ctcn{W}{_b^C},\quad\ctcn{P}{^c_b}\ct{m}{_c}=0=\ctcn{P}{^a_c}\ct{m}{^c},\quad\ctcn{P}{^c_c}=2\ ,
		\end{equation}
	which in terms of $ \ct{m}{_a} $ reads
		\begin{equation}\label{eq:projector-congruences}
			\ctcn{P}{_{ab}}=\ms{_{ab}}-\ct{m}{_a}\ct{m}{_b}\ .
		\end{equation}
	The acceleration, expansion tensor, and vorticity of $ \ct{m}{^a} $ are denoted by
		\begin{equation}\label{eq:kinematic}
		\ctcn{a}{_b}\defeq \ct{m}{^c}\cds{_c}\ct{m}{_b}\ ,\quad
		\ctcn{\kappa}{_{ab}}\defeq \ctcn{P}{^c_a}\ctcn{P}{^d_b}\cds{_{(c}}\ct{m}{_{d)}}\ ,\quad
		\ctcn{\omega}{_{ab}}\defeq \ctcn{P}{^c_a}\ctcn{P}{^d_b}\cds{_{[c}}\ct{m}{_{d]}}\ ,
		\end{equation}
	respectively, and the shear is defined as
		\begin{equation}\label{eq:kappa-congruence}
		\ctcn{\Sigma}{_{ab}}\defeq \ctcn{\kappa}{_{ab}}-\frac{1}{2}\ctcn{P}{_{ab}}\cscn{\kappa}\ ,\quad\cscn{\kappa}\defeq \ctcn{P}{^{cd}}\ctcn{\kappa}{_{cd}}\ .
		\end{equation}
	Since these tensors are orthogonal to $ \ct{m}{^a} $, they can be expressed as
			\begin{equation}\label{eq:kinematic-AB}
				\ctcn{a}{_a}=\ctcn{W}{_a^A}\ctcn{a}{_A}\ ,\quad
				\ctcn{\kappa}{_{ab}}=\ctcn{W}{_a^A}\ctcn{W}{_b^B}\ctcn{\kappa}{_{AB}}\ ,\quad
				\ctcn{\Sigma}{_{ab}}=\ctcn{W}{_a^A}\ctcn{W}{_b^B}\ctcn{\Sigma}{_{AB}}\ ,\quad
				\ctcn{\omega}{_{ab}}=\ctcn{W}{_a^A}\ctcn{W}{_b^B}\ctcn{\omega}{_{AB}}\ .
			\end{equation}
	It is possible to define a one-parameter family ---depending on $ v $--- of metrics $ \mcn{_{AB}} $, connections $ \ctcn{\gamma}{^A_{BC}} $ and volume forms $ \ctcn{\epsilon}{_{AB}} $ in $ \Scn $ --see appendix A in \cite{Fernandez-AlvarezSenovilla2022b}. From now on, for the sake of briefness, they are referred to as metric, connection and volume form. Nevertheless, \emph{one has to bear in mind their true significance}; it will be explicitly remarked when necessary. In particular, the metric $ \mcn{_{AB}} $ on $ \Scn $ and its inverse $ \mcn{^{AB}} $ can be written as	
		\begin{equation}
			\mcn{_{AB}}=\ctcn{E}{^a_A}\ctcn{E}{^b_{B}}\ms{_{ab}}\ ,\quad \mcn{^{AB}}=\ctcn{W}{_{a}^A}\ctcn{W}{_{b}^B}\ms{^{ab}}\ .
		\end{equation}
%	Since conformal classes $ \brkt{\ms{_{ab}}} $ are being considered, it is natural to introduce a conformal class of vector fields $ \ct{m}{^a} $ ($ \ctg{m}{^a}=1/\omega\ct{m}{^a} $ in order to preserve the condition $ \ct{m}{^a}\ct{m}{_{a}}=1 $), and hence a conformal class of metrics $ \brkt{\mcn{_{AB}}} $ on $ \Scn $; under transformation \eqref{eq:conformal-metric-transformation},
%		\begin{equation}
%			\mcng{_{AB}}=\omega^2\mcn{_{AB}}\ .
%		\end{equation}
	Also, one introduces a two-dimensional connection $ \ctcn{\gamma}{^C_{AB}} $ as
		\begin{equation}\label{eq:connection-S2}
			\ctcn{\gamma}{^C_{AB}}\defeq \ctcn{E}{^a_{A}}\ctcn{W}{_{c}^C}\cds{_{a}}\ctcn{E}{^c_{B}}\ ,
		\end{equation}
	where $ \cds{_{a}} $ stands for the covariant derivative associated to the three-dimensional Levi-Civita connection $ \cts{\Gamma}{^a_{bc}} $ of $ \prn{\I,\ms{_{ab}}} $. The connection $ \ctcn{\gamma}{^C_{AB}} $ is used to define a covariant derivative $ \cdcn{_{A}} $ on $ \Scn $ in the usual way
		\begin{equation}\label{eq:covariant-derivative-Scn}
			\cdcn{_A}\ct{v}{^B}\defeq \ctcn{E}{^a_A}\partial{_a}\ct{v}{^B}+\ctcn{\gamma}{^B_{AC}}\ct{v}{^C}, \text{ with } \ct{v}{^A}=\ct{v}{^a}\ctcn{W}{_a^A},\quad \ct{v}{^a}\ct{m}{_a}=0.
		\end{equation}
	This derivative operator is metric and volume-preserving and for a general tensor field $ \ct{T}{^{a_1 ...a_r}_{b_1...b_q}} $ on $ \I $ one has
		\begin{align}\label{eq:intrinsicdevS2}
		&\ctcn{W}{_{m_{1}}^{A_1}}...\ctcn{W}{_{m_r}^{A_r}}\ctcn{E}{^{n_1}_{B_q}}...\ctcn{E}{^{n_q}_{B_q}}\ctcn{E}{^r_C}\cds{_r}\ct{T}{^{m_1...m_r}_{n_1...n_q}}=\cdcn{_C}\ctcn{T}{^{A_1...A_r}_{B_1...B_q}}\nonumber\\
		&+\sum_{i=1}^{r}\ct{T}{^{A_1...A_{i-1}s A_{i+1}...A_r}_{B_1...B_q}}\ct{m}{_s}\prn{\ctcn{\kappa}{_C^{A_i}}+\ctcn{\omega}{_C^{A_i}}}+\sum_{i=1}^{q}\ct{T}{^{A_1...A_r}_{B_1...B_{i-1}s B_{i+1}...B_q}}\prn{\ctcn{\kappa}{_{CB_i}}+\ctcn{\omega}{_{CB_i}}}\ct{m}{^{s}}.
		\end{align}

		%%%%%%%%%%%%%%%%%%%%%%%%%%%%%%%%%%%%%%%%%%%%%%%%%%%%%%%%%%%%%%%%%%%%%%%%
	\section{Triplets of two-dimensional connections and triads}
		If one considers different families of curves on $ \prn{\I,\ms{_{ab}}} $, each one has its own projected surface. For the rest of the work, it is convenient to introduce the following definition involving three orthogonal congruences:
		\begin{deff}[Triad and triplet of connections]\label{def:triad-connections}
			Let $ \cbrkt{\ct{e}{^a_{i}}}= \cbrkt{\ct{m}{^{a}},\cta{m}{^{a}},\cth{m}{^{a}}} $ be a triad ---with $ i=1,2,3 $---, a set of unit vector fields  giving three orthogonal congruences on $ \prn{\I,\ms{_{ab}}} $ or on an open subset $ \Delta\subset\I $, and denote by $ \Scn $, $ \acute{\Scn} $ and $ \hat{\Scn} $ their associated projected surfaces. Then, a triplet of connections is a set $ \cbrkt{\cdcn{_{A}},\cdcna{_{\Aa}},\cdcnh{_{\Ah}}} $ where each element represents the one-parameter family of two-dimensional connections in $ \Scn $, $ \acute{\Scn} $ and $ \hat{\Scn} $, respectively.
		\end{deff}
	To prevent confusion, decorations are used in quantities and Latin indices associated with  $ \acute{\Scn} $ and $ \hat{\Scn} $, respectively. The following geometric identities follow by direct computation from definitions in \cref{eq:kinematic}:
		\begin{lemma}\label{thm:geom-id}
			Let $ \prn{\I,\ms{_{ab}}} $ be a three-dimensional Riemannian manifold. Then, the kinematic quantities of a triad $\cbrkt{\ct{e}{^a_{i}}}= \cbrkt{\ct{m}{^{a}},\cta{m}{^{a}},\cth{m}{^{a}}} $ obey the identities
				\begin{align}
					\ctcn{\Sigma}{_{\ma\mh}}&=\ctcnh{\omega}{_{m\ma}}+\ctcna{\omega}{_{m\mh}}\ ,\quad
					\ctcna{\Sigma}{_{m\mh}}=\ctcn{\omega}{_{\ma\mh}}+\ctcnh{\omega}{_{\ma m}}\ ,\quad
					\ctcnh{\Sigma}{_{m\ma}}=\ctcn{\omega}{_{\mh\ma}}+\ctcna{\omega}{_{\mh m}}\ ,\label{eq:omegas-sigmas}\\
					\ctcn{\Sigma}{_{\ma\ma}}&=-\ctcna{a}{_{m}}-\frac{1}{2}\ctcn{\kappa}{}=-\ctcn{\Sigma}{_{\mh\mh}}=\ctcnh{a}{_{m}}+\frac{1}{2}\ctcn{\kappa}{},\\
					\ctcna{\Sigma}{_{\mh\mh}}&=-\ctcnh{a}{_{\ma}}-\frac{1}{2}\ctcna{\kappa}{}=-\ctcna{\Sigma}{_{mm}}=\ctcn{a}{_{\ma}}+\frac{1}{2}\ctcna{\kappa}{},\\
					\ctcnh{\Sigma}{_{\ma\ma}}&=-\ctcna{a}{_{\mh}}-\frac{1}{2}\ctcnh{\kappa}{}=-\ctcnh{\Sigma}{_{mm}}=\ctcn{a}{_{\mh}}+\frac{1}{2}\ctcnh{\kappa}{}.
				\end{align}
		\end{lemma}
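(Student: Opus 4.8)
The plan is to derive every identity by a short direct computation from the definitions \eqref{eq:kinematic}--\eqref{eq:kappa-congruence}, using only two elementary features of an orthonormal triad together with metric compatibility of $\cds{}$. First, each $\ct{e}{^a_i}$ is a unit field, so $\ct{e}{^d_i}\cds{_c}\ct{e}{_{id}}=0$ (no sum on $i$). Second, distinct members are mutually orthogonal, so differentiating $\ct{e}{^d_i}\ct{e}{_{jd}}=0$ for $i\neq j$ yields $\ct{e}{^d_i}\cds{_c}\ct{e}{_{jd}}=-\ct{e}{^d_j}\cds{_c}\ct{e}{_{id}}$. I would also use that, for the congruence defined by $\ct{m}{^a}$, the projector obeys $\ctcn{P}{_{ab}}\cta{m}{^a}\cta{m}{^b}=\ctcn{P}{_{ab}}\cth{m}{^a}\cth{m}{^b}=1$ and $\ctcn{P}{_{ab}}\cta{m}{^a}\cth{m}{^b}=0$, so that on the orthogonal plane $\ctcn{P}{^{ab}}=\cta{m}{^a}\cta{m}{^b}+\cth{m}{^a}\cth{m}{^b}$ and the contractions appearing in the lemma, e.g.\ $\ctcn{\Sigma}{_{\ma\mh}}$, are just $\ctcn{\Sigma}{_{ab}}\cta{m}{^a}\cth{m}{^b}$ (and cyclically).

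For the first identity in \eqref{eq:omegas-sigmas} I would contract $\ctcn{\Sigma}{_{ab}}$ with $\cta{m}{^a}\cth{m}{^b}$: the trace term in \eqref{eq:kappa-congruence} drops out, leaving the symmetrised derivative $\cta{m}{^{(c}}\cth{m}{^{d)}}\cds{_c}\ct{m}{_d}$. Moving the derivative off the unit field $\ct{m}{}$ via orthogonality, and carrying out the analogous expansion of $\ctcnh{\omega}{_{m\ma}}$ and $\ctcna{\omega}{_{m\mh}}$, one checks that the sum of the latter two reproduces the former up to the single term $\tfrac12\ct{m}{^c}\cds{_c}(\cta{m}{^d}\cth{m}{_d})$, which vanishes because $\cta{m}{}$ and $\cth{m}{}$ are orthogonal. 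The remaining two identities of \eqref{eq:omegas-sigmas} are obtained by cyclically permuting $(\ct{m}{^a},\cta{m}{^a},\cth{m}{^a})$.

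For the three chains of diagonal identities: since $\ctcn{P}{_{ab}}\cta{m}{^a}\cta{m}{^b}=1$ one has $\ctcn{\Sigma}{_{\ma\ma}}=\ctcn{\kappa}{_{\ma\ma}}-\tfrac12\cscn{\kappa}$, and $\ctcn{\kappa}{_{\ma\ma}}=\cta{m}{^c}\cta{m}{^d}\cds{_c}\ct{m}{_d}=-\cta{m}{^c}\ct{m}{^d}\cds{_c}\cta{m}{_d}=-\ctcna{a}{_m}$, which is the first equality. The equality $\ctcn{\Sigma}{_{\mh\mh}}=-\ctcn{\Sigma}{_{\ma\ma}}$ is simply $\ctcn{P}{^{ab}}\ctcn{\Sigma}{_{ab}}=0$ read on the orthogonal plane, and the last equality in the chain is the negative of the first relation with $\cth{m}{}$ in place of $\cta{m}{}$. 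The last two displays of the lemma are the cyclic images of this one.

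I do not expect a genuine obstacle: the content is routine index manipulation. The only thing requiring care is bookkeeping --- keeping straight which accent labels the congruence whose kinematic tensor is being computed and which one merely labels a contraction slot, together with the systematic use of the two orthonormality relations above to convert every derivative of a triad vector carrying a ``wrong'' contracted index into an acceleration- or vorticity-type term of another congruence.
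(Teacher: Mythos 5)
Your proposal is correct, and it is essentially the paper's own argument: the paper simply asserts that these identities ``follow by direct computation from definitions in \eqref{eq:kinematic}'', and your computation — contracting the kinematic tensors with the triad, using unit-norm and mutual orthogonality to move derivatives between triad members, and invoking tracelessness of the shear on the orthogonal plane — is precisely that computation, carried out correctly (including the cancellation of the leftover term $\tfrac12\ct{m}{^c}\cds{_c}(\cta{m}{^d}\cth{m}{_d})=0$ and the cyclic generation of the remaining identities).
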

		\begin{remark}
		Observe that, due to the orthonormality of the triad $ \cbrkt{\ct{e}{^a_{i}}} $, one has
			\begin{equation}
			 \ct{m}{^{a}}=\ctcna{E}{^a_{\Aa}}\ct{m}{^\Aa}=\ctcnh{E}{^a_{\Ah}}\ct{m}{^\Ah}\ ,\quad \cta{m}{^{a}}=\ctcn{E}{^a_{A}}\cta{m}{^A}=\ctcnh{E}{^a_{\Ah}}\cta{m}{^\Ah}\ ,\quad \cth{m}{^{a}}=\ctcna{E}{^a_{\Aa}}\cth{m}{^\Aa}=\ctcn{E}{^a_{A}}\cth{m}{^A}\ .
			\end{equation}
		The subindices $ m $, $ \ma $ and $ \mh $ denote contraction with $ \cbrkt{\ct{m}{^{a}},\cta{m}{^{a}},\cth{m}{^{a}}} $ of the corresponding kinematic quantities. For instance, $ \ctcn{\Sigma}{_{\ma\mh}}=\ct{\ma}{^A}\ct{\mh}{^B}\ctcn{\Sigma}{_{AB}} $.
		\end{remark}
		Since the vorticities are antisymmetric, summation of the three relations in \cref{eq:omegas-sigmas} produces another identity that will be used later:
		\begin{corollary}
			Let $ \prn{\I,\ms{_{ab}}} $ be a three-dimensional Riemannian manifold. Then, the shears of a triad $\cbrkt{\ct{e}{^a_{i}}}= \cbrkt{\ct{m}{^{a}},\cta{m}{^{a}},\cth{m}{^{a}}} $ fulfil
			\begin{equation}\label{eq:identity-shears}
				\ctcn{\Sigma}{_{\ma\mh}}+\ctcna{\Sigma}{_{ m\mh}}+\ctcnh{\Sigma}{_{m\ma}}=0\ .
			\end{equation}
		\end{corollary}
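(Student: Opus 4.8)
The proof is a one-line consequence of \cref{eq:omegas-sigmas} together with the antisymmetry of the vorticity two-tensors, and that is exactly the route I would take. First I would add the three relations in \cref{eq:omegas-sigmas} side by side, obtaining on the left the combination $\ctcn{\Sigma}{_{\ma\mh}}+\ctcna{\Sigma}{_{m\mh}}+\ctcnh{\Sigma}{_{m\ma}}$ that appears in \cref{eq:identity-shears}, and on the right a sum of six vorticity contractions: $\ctcnh{\omega}{_{m\ma}}+\ctcna{\omega}{_{m\mh}}+\ctcn{\omega}{_{\ma\mh}}+\ctcnh{\omega}{_{\ma m}}+\ctcn{\omega}{_{\mh\ma}}+\ctcna{\omega}{_{\mh m}}$.

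Next I would regroup these six terms according to which congruence each vorticity belongs to. The two contractions of $\ctcn{\omega}{_{AB}}$ (the vorticity of $\ct{m}{^a}$) combine into $\ctcn{\omega}{_{\ma\mh}}+\ctcn{\omega}{_{\mh\ma}}$; the two contractions of $\ctcna{\omega}{_{AB}}$ into $\ctcna{\omega}{_{m\mh}}+\ctcna{\omega}{_{\mh m}}$; and the two contractions of $\ctcnh{\omega}{_{AB}}$ into $\ctcnh{\omega}{_{m\ma}}+\ctcnh{\omega}{_{\ma m}}$. Since by \cref{eq:kinematic} each vorticity is defined through an antisymmetrisation and is therefore antisymmetric in its two (projected) indices, every one of these three pairs is of the form $X_{AB}+X_{BA}$ contracted into an antisymmetric tensor and hence vanishes. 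Consequently the right-hand side is identically zero, which is precisely \cref{eq:identity-shears}.

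There is no real obstacle here; the only point worth verifying is a consistency one, already covered by the Remark preceding the statement: because the triad $\cbrkt{\ct{e}{^a_{i}}}$ is orthonormal, the vectors $\cta{m}{^a}$ and $\cth{m}{^a}$ are tangent to the projected surface $\Scn$ of $\ct{m}{^a}$ (and cyclically), so the mixed contractions such as $\ctcn{\omega}{_{\ma\mh}}=\ct{\ma}{^A}\ct{\mh}{^B}\ctcn{\omega}{_{AB}}$ are meaningful and the regrouping above is legitimate.
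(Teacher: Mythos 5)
Your proposal is correct and is exactly the argument the paper itself gives: the corollary is obtained by summing the three relations in \cref{eq:omegas-sigmas} and observing that the six vorticity terms cancel pairwise by the antisymmetry of each $\omega_{AB}$. Nothing is missing.
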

	Next, one of the main results is presented,
		\begin{thm}[Derivative operator associated to a triad]\label{thm:operator-D}
			Let $ \I $ be a three-dimensional Riemannian manifold endowed with metric $ \ms{_{ab}} $, and consider a triad of unit vector fields $\cbrkt{\ct{e}{^a_{i}}}= \cbrkt{\ct{m}{^{a}},\cta{m}{^{a}},\cth{m}{^{a}}} $ as in \cref{def:triad-connections}. Then, associated to this triad, there exists a unique well-defined torsion-free differential operator $ \cdt{_{a}} $ whose action on tensor fields $ \ct{T}{_{b_1...b_n}^{c_1...c_q}} $ reads
				\begin{equation}\label{eq:operator-D}
				 	\cdt{_{a}}\ct{T}{_{b_1...b_n}^{c_1...c_q}}=\ct{D}{_{a}}\ct{T}{_{b_1...b_n}^{c_1...c_q}}+\sum_{i=1}^{n}\ct{t}{^r_{ab_{i}}}\ct{T}{_{b_1...r...b_n}^{c_1...c_q}}-\sum_{j=1}^{q}\ct{t}{^{c_j}_{ar}}\ct{T}{_{b_1...b_n}^{c_1...r...c_q}}\ ,
				\end{equation}
			where $ \ct{t}{^a_{bc}} $ is a tensor field that is antisymmetric on its covariant indices, vanishes under index contraction, depends on the vorticities of $ \cbrkt{\ct{e}{^a_{i}}} $ and can be expressed as
				\begin{equation}
					\ct{t}{^a_{bc}}\defeq 2\ct{m}{^a}\ctcn{\omega}{_{bc}}+2\cta{m}{^a}\ctcna{\omega}{_{bc}}+2\cth{m}{^a}\ctcnh{\omega}{_{bc}}\ ,
				\end{equation}
			and $ \ct{D}{_{a}}\ct{T}{_{b_1...b_n}^{c_1...c_q}} $ is a tensor field depending only on the triplet of two-dimensional connections of \cref{def:triad-connections} associated with  $ \cbrkt{\ct{e}{^a_{i}}} $ and acting on components of $ \ct{T}{_{b_1...b_n}^{c_1...c_q}} $. Also, this differential operator is related to the Levi-Civita  connection $ \cds{_{a}} $ of $ \prn{\I,\ms{_{ab}}} $ by
				\begin{equation}\label{eq:operator-D-levi-civita-connection}
				 	\cdt{_{a}}\ct{T}{_{b_1...b_n}^{c_1...c_q}}=\cds{_{a}}\ct{T}{_{b_1...b_n}^{c_1...c_q}}-\sum_{i=1}^{n}\ct{s}{^r_{ab_{i}}}\ct{T}{_{b_1...r...b_n}^{c_1...c_q}}+\sum_{j=1}^{q}\ct{s}{^{c_j}_{ar}}\ct{T}{_{b_1...b_n}^{c_1...r...c_q}}\ ,
				\end{equation}
			where $ \ct{s}{^a_{bc}} $ is a tensor field symmetric on its covariant indices, traceless in all of them, that can be written in terms of the shears of $  \cbrkt{\ct{e}{^a_{i}}}$ as
				\begin{equation}\label{eq:shear-tensor-triad}
				 	\ct{s}{^a_{bc}}\defeq 2\ctcn{\Sigma}{_{\mh\ma}}\ct{m}{^a}\cth{m}{_{(b}}\cta{m}{_{c)}}+2\ctcna{\Sigma}{_{\mh m}}\cta{m}{^a}\cth{m}{_{(b}}\ct{m}{_{c)}}+2\ctcnh{\Sigma}{_{m\ma}}\cth{m}{^a}\ct{m}{_{(b}}\cta{m}{_{c)}}\ . 
				\end{equation}
		\end{thm}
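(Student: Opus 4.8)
The plan is to read \cref{eq:operator-D-levi-civita-connection} as the \emph{definition} of the operator and then to derive from it the intrinsic presentation \cref{eq:operator-D}. Thus one declares $\cdt{_a}$ to act on a tensor field as $\cds{_a}$ does, corrected by the tensor $\ct{s}{^a_{bc}}$ of \cref{eq:shear-tensor-triad} with the signs shown in \cref{eq:operator-D-levi-civita-connection}. Because $\ct{s}{^a_{bc}}$ is a genuine tensor field, this prescription is $\mathbb{R}$-linear, satisfies the Leibniz rule over products with functions, commutes with contractions and reduces to $\partial_a$ on scalars, so $\cdt{_a}$ is a derivative operator; and since $\ct{s}{^a_{bc}}=\ct{s}{^a_{cb}}$, the difference tensor relating $\cdt{_a}$ to the Levi-Civita connection $\cds{_a}$ is symmetric in its two lower indices, hence $\cdt{_a}$ is torsion-free. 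That $\ct{s}{^a_{bc}}$ is traceless in every pair of indices is also immediate from \cref{eq:shear-tensor-triad}: each summand is the product of one triad vector with two triad covectors whose three directions are mutually orthogonal, so every contraction kills it; this is the pointwise shadow of the corollary identity \cref{eq:identity-shears}.

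The real content is \cref{eq:operator-D}: that the operator just defined can equally be written as an intrinsic operator $\ct{D}{_a}$ — built only from the three two-dimensional connections of \cref{def:triad-connections} acting on the triad components of the argument — plus the vorticity correction governed by $\ct{t}{^a_{bc}}=2\ct{m}{^a}\ctcn{\omega}{_{bc}}+2\cta{m}{^a}\ctcna{\omega}{_{bc}}+2\cth{m}{^a}\ctcnh{\omega}{_{bc}}$. Since both sides of \cref{eq:operator-D} are derivative operators that reduce to $\partial_a$ on scalars and commute with contractions, it suffices to check the identity on the triad $\{\ct{m}{^a},\cta{m}{^a},\cth{m}{^a}\}$ and its dual coframe, the general case following by multilinearity and the Leibniz rule. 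For a basis covector, say $\ct{m}{_b}$, one expands $\cds{_a}\ct{m}{_b}$ into the kinematic pieces (acceleration, shear, expansion-trace, vorticity) associated with $\ct{m}{^a}$ through \cref{eq:kinematic,eq:kinematic-AB}, and uses the orthonormality relations of the Remark, by which $\ct{m}{^a}$ is tangent to $\acute{\Scn}$ and $\hat{\Scn}$, so that its transverse derivatives are controlled by the two-dimensional connections of those surfaces via \cref{eq:covariant-derivative-Scn,eq:intrinsicdevS2}. The essential input is then \cref{thm:geom-id}, which lets one trade the transverse accelerations $\ctcna{a}{_m}$, $\ctcnh{a}{_m}$ and the diagonal shear components against one another; this is precisely what makes the several expansion-trace contributions of the three congruences cancel. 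What remains is (a) an antisymmetric part that reassembles into the correction by $\ct{t}{^a_{bc}}$, (b) the symmetric traceless correction by $\ct{s}{^a_{bc}}$ that was removed in \cref{eq:operator-D-levi-civita-connection}, and (c) a residue which, by \cref{eq:connection-S2,eq:intrinsicdevS2}, is expressible through the three two-dimensional connections of \cref{def:triad-connections} alone, acting on the components of $\ct{m}{_b}$; this residue is $\ct{D}{_a}\ct{m}{_b}$. Running the same computation on the remaining basis fields and re-assembling yields \cref{eq:operator-D} for arbitrary tensor fields and exhibits $\ct{D}{_a}$ as a function of the triplet of \cref{def:triad-connections} only.

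It remains to record the algebraic properties of $\ct{t}{^a_{bc}}$ and uniqueness. Antisymmetry of $\ct{t}{^a_{bc}}$ in $b,c$ is inherited from the antisymmetry of each vorticity, and $\ct{t}{^a_{ac}}=0$ because each vorticity is orthogonal to its own congruence direction, $\ct{m}{^a}\ctcn{\omega}{_{ac}}=0$ and cyclically — both facts being read off \cref{eq:kinematic,eq:kinematic-AB}. For uniqueness, \cref{eq:operator-D} shows that any torsion-free derivative operator of the claimed form is $\ct{D}{_a}$ plus the correction by the fixed $\ct{t}{^a_{bc}}$, and the computation above determines $\ct{D}{_a}$ entirely in terms of the triplet of \cref{def:triad-connections}; hence the operator is uniquely fixed by the triad.

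The main obstacle is the computation of the second paragraph. Producing \emph{some} split of $\cds{_a}$ of a general tensor field into an intrinsic part and symmetric and antisymmetric corrections is mechanical; the work is in showing the split has exactly the claimed shape — that the symmetric correction is \emph{traceless}, all expansion-trace terms cancelling among the three congruences, and that the intrinsic part makes no reference to $\cds{_a}$. Both facts rest on the precise algebra of \cref{thm:geom-id}, so the heart of the proof is verifying that the bookkeeping of the triad contributions — three congruences, each seeing the kinematics of the other two projected onto its own surface — closes up exactly as those identities dictate.
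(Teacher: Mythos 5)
Your proposal is correct and follows essentially the same route as the paper: the operator is pinned down through its shear-tensor difference from the Levi-Civita connection, torsion-freeness and tracelessness are read off the algebraic form of $\ct{s}{^a_{bc}}$, and the substantive step is the decomposition of $\cds{_{a}}$ into triad components where \cref{thm:geom-id} forces the non-intrinsic cross-terms to cancel against the $\ct{s}{^a_{bc}}+\ct{t}{^a_{bc}}$ correction, leaving a residue built only from the triplet of two-dimensional connections. The only difference is presentational: the paper first introduces the torsionful auxiliary operator $\ct{D}{_{a}}\defeq\cds{_{a}}-(s+t)$-corrections and restores torsion-freeness at the end, whereas you define the torsion-free $\cdt{_{a}}$ at the outset and peel off $\ct{t}{^a_{bc}}$ afterwards.
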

		\begin{remark}\label{rmk:diff-prop}
		By `well defined differential operator' it is meant that the typical properties --e.g., see \cite{Wald1984}-- are fulfilled: linearity, Leibnitz rule, commutativity with contraction of indices, coinciding with the partial derivative when acting on functions, and torsion-free.
		\end{remark}
		\begin{remark}\label{rmk:three-dim-comp}
		 The tensor field $ \ct{s}{^a_{bc}} $ cannot be written in terms of the triplet of connections $ \cbrkt{\cdcn{_{A}},\cdcna{_{\Aa}},\cdcnh{_{\Ah}}} $ because $ \ctcn{\Sigma}{_{\mh\ma}} $, $ \ctcna{\Sigma}{_{\mh m}} $, $ \ctcnh{\Sigma}{_{m\ma}} $ depend on `purely three-dimensional' data of the Levi-Civita connection. More concretely, the Christoffel symbols given by $ \cts{\Gamma}{^m_{\mh\ma}} $, $ \cts{\Gamma}{^\mh_{m\ma}} $ and $ \cts{\Gamma}{^\ma_{\mh m}} $ are involved.
		\end{remark}
		\begin{proof}
			Begin by introducing an auxiliary differential operator $  \ct{D}{_{a}} $ defined by
				\begin{equation}
					\ct{D}{_{a}}\ct{T}{_{b_1...b_n}^{c_1...c_q}}\defeq\cds{_{a}}\ct{T}{_{b_1...b_n}^{c_1...c_q}}-\sum_{i=1}^{n}\prn{\ct{s}{^r_{ab_{i}}}+\ct{t}{^r_{ab_{i}}}}\ct{T}{_{b_1...r...b_n}^{c_1...c_q}}+\sum_{j=1}^{q}\prn{\ct{s}{^{c_j}_{ar}}+\ct{t}{^{c_j}_{ar}}}\ct{T}{_{b_1...b_n}^{c_1...r...c_q}}\ .
				\end{equation}
			Using the properties of $ \cds{_{a}} $ it is straightforward to check that $ \ct{D}{_{a}} $ is unique and satisfies all properties of \cref{rmk:diff-prop} except for having torsion --i.e., for a general differentiable function $ f $, $ \ct{D}{_{a}}\ct{D}{_{b}}f-\ct{D}{_{b}}\ct{D}{_{a}}f=-2\ct{t}{^c_{ab}}\dpart{c}f\neq 0$. The next step is to show that $ \ct{D}{_{a}}\ct{T}{_{b_1...b_n}^{c_1...c_q}} $ depends only on the action of the triplet of two-dimensional connections of \cref{def:triad-connections} on $ \ct{T}{_{b_1...b_n}^{c_1...c_q}} $. To see this, consider first the action on a one-form $ \ct{v}{_{a}} $,
			\begin{align}
				\ct{D}{_{a}}\ct{v}{_{b}}&=\cds{_{a}}\ct{v}{_{b}}-\prn{\ct{s}{^r_{ab}}+\ct{t}{^r_{ab}}}\ct{v}{_{r}}\nonumber\\
				&=\ctcn{W}{_{a}^A}\ctcn{W}{_{b}^B}\cdcn{_{A}}\ctcn{v}{_{B}}+\cta{m}{_{a}}\cta{m}{_{b}}\cta{m}{^\Ah}\cta{m}{^\Bh}\cdcnh{_{\Ah}}\prn{\ct{m}{^e}\ct{v}{_{e}}\ct{m}{_{\Bh}}}+\cth{m}{_{a}}\cth{m}{_{b}}\cth{m}{^\Aa}\cth{m}{^\Ba}\cdcna{_{\Aa}}\prn{\ct{m}{^e}\ct{v}{_{e}}\ct{m}{_{\Ba}}}\nonumber\\
				&+ \ct{m}{_{a}}\cta{m}{_{b}}\ct{m}{^\Ah}\cta{m}{^\Bh}\cdcnh{_{\Ah}}\ctcn{v}{_{\Bh}}+\ct{m}{_{b}}\cta{m}{_{a}}\ct{m}{^\Bh}\cta{m}{^\Ah}\cdcnh{_{\Ah}}\ctcn{v}{_{\Bh}}+\ct{m}{_{a}}\cth{m}{_{b}}\ct{m}{^\Aa}\cth{m}{^\Ba}\cdcna{_{\Aa}}\ctcn{v}{_{\Ba}}+\ct{m}{_{b}}\cth{m}{_{a}}\ct{m}{^\Ba}\cth{m}{^\Aa}\cdcna{_{\Aa}}\ctcn{v}{_{\Ba}}\nonumber\\
				+&\ct{m}{_{a}}\ct{m}{_{b}}\brkt{\ct{m}{^\Aa}\ct{m}{^\Ba}\cdcna{_{\Aa}}\ctcn{v}{_{\Ba}}+\ct{m}{^\Ah}\ct{m}{^\Bh}\cdcnh{_{\Ah}}\prn{\cta{m}{_{\Bh}}\cta{m}{^e}\ct{v}{_{e}}}}\ .
			\end{align}  
			In the computation above, one decomposes $ \cds{_{a}}\ct{v}{_{b}} $ into its tangent and orthogonal parts with respect to $ \ct{m}{^a} $ ---or equivalently w.r.t. $ \cta{m}{^a} $ or $ \cth{m}{^a} $--- and then does the same with the mixed terms, but this time with respect to the other two elements of the triad ---$ \cta{m}{^a} $ and $ \cth{m}{^a} $. Afterwards, one introduces the derivative operators $ \cbrkt{\cdcn{_{A}},\cdcna{_{\Aa}},\cdcnh{_{\Ah}}} $ where possible. There are terms which cannot be written in terms of these derivatives ---for the reasons given in \cref{rmk:three-dim-comp}---, which precisely compensate the term $ \prn{\ct{s}{^r_{ab}}+\ct{t}{^r_{ab}}}\ct{v}{_{b}} $. The same type of algebraic manipulation leads to a formula for contravariant vector fields $ \ct{v}{^a} $,
			\begin{align}
				\ct{D}{_{a}}\ct{v}{^{b}}&=\cds{_{a}}\ct{v}{^{b}}+\prn{\ct{s}{^b_{ar}}+\ct{t}{^b_{ar}}}\ct{v}{^{r}}\nonumber\\
				&=\ctcn{W}{_{a}^A}\ctcn{E}{^{b}_B}\cdcn{_{A}}\ctcn{v}{^{B}}+\cta{m}{_{a}}\cta{m}{^{b}}\cta{m}{^\Ah}\cta{m}{_\Bh}\cdcnh{_{\Ah}}\prn{\ct{m}{^e}\ct{v}{_{e}}\ct{m}{^{\Bh}}}+\cth{m}{_{a}}\cth{m}{^{b}}\cth{m}{^\Aa}\cth{m}{_\Ba}\cdcna{_{\Aa}}\prn{\ct{m}{^e}\ct{v}{_{e}}\ct{m}{^{\Ba}}}\nonumber\\
				&+ \ct{m}{_{a}}\cta{m}{^{b}}\ct{m}{^\Ah}\cta{m}{_\Bh}\cdcnh{_{\Ah}}\ctcn{v}{^{\Bh}}+\ct{m}{^{b}}\cta{m}{_{a}}\ct{m}{_\Bh}\cta{m}{^\Ah}\cdcnh{_{\Ah}}\ctcn{v}{^{\Bh}}+\ct{m}{_{a}}\cth{m}{^{b}}\ct{m}{^\Aa}\cth{m}{_\Ba}\cdcna{_{\Aa}}\ctcn{v}{^{\Ba}}+\ct{m}{^{b}}\cth{m}{_{a}}\ct{m}{_\Ba}\cth{m}{^\Aa}\cdcna{_{\Aa}}\ctcn{v}{^{\Ba}}\nonumber\\
				&+\ct{m}{_{a}}\ct{m}{^{b}}\brkt{\ct{m}{^\Aa}\ct{m}{_\Ba}\cdcna{_{\Aa}}\ctcn{v}{^{\Ba}}+\ct{m}{^\Ah}\ct{m}{_\Bh}\cdcnh{_{\Ah}}\prn{\cta{m}{^{\Bh}}\cta{m}{^e}\ct{v}{_{e}}}}+\nonumber\\
				&+\ms{^{rb}}\prn{\ct{s}{^e_{ar}}+\ct{t}{^e_{ar}}}\ct{v}{_{e}}+\prn{\ct{s}{^b_{ar}}+\ct{t}{^b_{ar}}}\ct{v}{^{r}}\ .
			\end{align}  	
			By application of \cref{eq:omegas-sigmas}, the last line vanishes. One proceeds in the same way with tensor fields of arbitrary rank.  This shows that $ \ct{D}{_{a}}\ct{T}{_{b_1...b_n}^{c_1...c_q}} $ depends only on the action of $ \cbrkt{\cdcn{_{A}},\cdcna{_{\Aa}},\cdcnh{_{\Ah}}} $ on $ \ct{T}{_{b_1...b_n}^{c_1...c_q}} $. Then, using $ \ct{D}{_{a}}\ct{T}{_{b_1...b_n}^{c_1...c_q}} $ in \cref{eq:operator-D}, it follows that $ \cdt{_{a}}\ct{T}{_{b_1...b_n}^{c_1...c_q}} $ is the desired torsion-free differential operator. 
		\end{proof}
	There is a classical result ---see\footnote{See also Corollary 3.4 in \cite{Bryant1991} where analicity of the metric is required, or \cite{Kowalski2013} for an extension to pseudo-Riemannian manifolds.} theorem 4.2 in \cite{DeTurck1984}---  that plays an important role in what comes next, 
		\begin{thm}[Classical metric diagonalisation result.]\label{thm:classic}
			Let $ \I $ be a three-dimensional Riemannian manifold endowed with metric $ \ms{_{ab}} $ of class $ C^\infty $. Then, every point $ p $ of $ \I $ lies in a neighbourhood $ \U_{p} $ on which there exists a $ C^\infty $ coordinate chart $ \cbrkt{x^{\ubar{a}}} $ in which the metric takes the diagonal form
				\begin{equation}
					h=\ct{h}{_{\ubar{1}\ubar{1}}}\df{x}^{\ubar{1}}\df{x}^{\ubar{1}}+\ct{h}{_{\ubar{2}\ubar{2}}}\df{x}^{\ubar{2}}\df{x}^{\ubar{2}}+\ct{h}{_{\ubar{3}\ubar{3}}}\df{x}^{\ubar{3}}\df{x}^{\ubar{3}}\ .
				\end{equation}
		\end{thm}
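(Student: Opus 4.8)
The statement is purely local and purely Riemannian, so I would fix a point $p\in\I$ and work on a small neighbourhood of it. The plan is to recast diagonalisability as the existence of a \emph{triply orthogonal system} of surfaces through $p$ and then to solve the determined first-order system that this entails. First I record the elementary reduction: a chart $\cbrkt{x^{\ubar a}}$ brings $h$ to the displayed diagonal form exactly when the coordinate vector fields are mutually $h$-orthogonal, equivalently when $\df{x}^{\ubar 1},\df{x}^{\ubar 2},\df{x}^{\ubar 3}$ are mutually orthogonal for the cometric $\ms{^{ab}}$ --- because then the Gram matrix $\ms{^{\ubar a\ubar b}}$ is diagonal, hence so is its inverse $\ms{_{\ubar a\ubar b}}$. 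So it is enough to produce three functions near $p$ with linearly independent, pairwise orthogonal gradients. I would rephrase this once more in terms of coframes: it suffices to find an $h$-orthonormal coframe $\cbrkt{\theta^1,\theta^2,\theta^3}$ near $p$ each of whose members is Frobenius-integrable, $\theta^i\wedge\df\theta^i=0$ (no summation). Integrating the line fields $\ker\theta^i$ then gives functions $x^{\ubar i}$ with $\theta^i=f_i\,\df{x}^{\ubar i}$, $f_i\neq0$; since $\theta^1\wedge\theta^2\wedge\theta^3$ is a nonvanishing volume form the $\df{x}^{\ubar i}$ are independent, so $\cbrkt{x^{\ubar i}}$ is a chart and $h=\sum_i f_i^2\,\df{x}^{\ubar i}\df{x}^{\ubar i}$ is diagonal.

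The next step is to set up the equations. Choosing once and for all a smooth $h$-orthonormal background coframe $\cbrkt{e^1,e^2,e^3}$ near $p$ (e.g.\ one from Riemann normal coordinates, so that its connection $1$-forms vanish at $p$), every orthonormal coframe is $\theta^i=R^i{}_j\,e^j$ for a smooth map $R\colon\U_p\to SO(3)$ --- three unknown functions. The three scalar conditions $\theta^i\wedge\df\theta^i=0$ then constitute a \emph{determined}, quasilinear, first-order system for $R$ whose coefficients are built algebraically from $h$ and its first derivatives, and the theorem asserts precisely the local solvability of this system with $R(p)=\mathrm{Id}$. As a sanity check, the two-dimensional analogue is trivial: there $\theta^i\wedge\df\theta^i$ is a $3$-form on a surface and hence vanishes identically, so every orthonormal coframe works and every surface metric is locally diagonal; the real content therefore appears only in dimension three, where the number of equations equals the number of unknowns. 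The same structure is visible through the classical route: fix the first foliation $x^{\ubar 1}=\mathrm{const}$ arbitrarily, choose orthogonal coordinates on each leaf --- always possible in two dimensions --- and then the obstruction to making these fit together orthogonally across leaves is exactly the system above.

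The main difficulty is that this determined first-order system is of no classical type: its symbol is neither elliptic nor hyperbolic, so Cauchy--Kovalevskaya and Cartan--K\"ahler settle only the \emph{analytic} case --- this is the content of Corollary~3.4 of \cite{Bryant1991}, which is why analyticity is assumed there. To obtain the $C^\infty$ statement as written I would not attempt a direct attack but instead invoke Theorem~4.2 of \cite{DeTurck1984}, where precisely this determined system for the orthogonal web is proved locally solvable in the smooth category --- by choosing the first foliation and the coordinates along its leaves adapted to the problem so as to reduce it to a succession of well-posed Cauchy-type subproblems, together with the a priori estimates these require. Granting that, pulling back the resulting functions and relabelling them $x^{\ubar 1},x^{\ubar 2},x^{\ubar 3}$ yields the asserted $C^\infty$ chart on a neighbourhood $\U_p$ of $p$, which completes the argument.
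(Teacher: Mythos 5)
The paper does not prove this theorem itself; it states it as a classical result and cites Theorem~4.2 of DeTurck--Yang (with a footnote to Bryant et al.\ for the analytic case), which is exactly the reference your argument ultimately rests on. Your reduction to an integrable orthonormal coframe and the determined first-order system is correct and standard, so your proposal is essentially the same as the paper's treatment --- both defer the genuinely hard smooth-category solvability to the cited external result.
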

	This, together with \cref{thm:operator-D}, implies the   following result:
		\begin{corollary}\label{thm:corollary-levi-civita}
			Let $ \I $ be a three-dimensional Riemannian manifold endowed with metric $ \ms{_{ab}} $ of class $ C^\infty $. Then, every point $ p $ of $ \I $ lies in a neighbourhood $ \U_{p} $ on which there exists a triad of unit vector fields $ \cbrkt{\ct{e}{^a_{i}}}=\cbrkt{\ct{m}{^a},\cta{m}{^a},\cth{m}{^a}} $ such that the associated differential operator $ \cts{\D}{_{a}} $ of \cref{thm:geom-id} coincides with the Levi-Civita connection of $ \prn{\I,\ms{_{ab}}} $:
				\begin{equation*}
					\cdt{_{a}}\ct{T}{_{b_1...b_n}^{c_1...c_q}}\equp\cds{_{a}}\ct{T}{_{b_1...b_n}^{c_1...c_q}}\ ,
				\end{equation*}
			and its action on arbitrary tensor fields $ \ct{T}{_{b_1...b_n}^{c_1...c_q}} $ depends only on the triplet of two-dimensional connections $ \cbrkt{\cdcn{_{A}},\cdcna{_{\Aa}},\cdcnh{_{\Ah}}} $ of \cref{def:triad-connections} associated to that triad of vector fields.
		\end{corollary}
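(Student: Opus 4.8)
The plan is to feed the output of \cref{thm:classic} into \cref{thm:operator-D}, choosing the triad so that all the ``correction'' tensors $\ct{t}{^a_{bc}}$ and $\ct{s}{^a_{bc}}$ vanish. First I would invoke \cref{thm:classic} to obtain, around a given point $p\in\I$, a neighbourhood $\U_p$ and a $C^\infty$ chart $\cbrkt{x^{\ubar a}}$ in which $\ms{_{ab}}$ is diagonal, $h=\ct{h}{_{\ubar1\ubar1}}\df{x}^{\ubar1}\df{x}^{\ubar1}+\ct{h}{_{\ubar2\ubar2}}\df{x}^{\ubar2}\df{x}^{\ubar2}+\ct{h}{_{\ubar3\ubar3}}\df{x}^{\ubar3}\df{x}^{\ubar3}$, with $\ct{h}{_{\ubar i\ubar i}}>0$. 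Then I would define the triad $\cbrkt{\ct{e}{^a_i}}=\cbrkt{\ct{m}{^a},\cta{m}{^a},\cth{m}{^a}}$ by normalising the coordinate vector fields, $\ct{m}{^a}=(\ct{h}{_{\ubar1\ubar1}})^{-1/2}\prn{\partial_{\ubar1}}^a$, and analogously for $\cta{m}{^a}$ and $\cth{m}{^a}$. Since the metric is diagonal these are mutually orthogonal unit vector fields, so they supply three orthogonal congruences on $\U_p$ as demanded by \cref{def:triad-connections}.

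The key observation is that each element of this triad is hypersurface-orthogonal, hence has vanishing vorticity. Indeed, the metric dual of $\ct{m}{^a}$ is $\ct{m}{_a}=(\ct{h}{_{\ubar1\ubar1}})^{1/2}\prn{\df{x}^{\ubar1}}_a$, a function times an exact one-form; therefore $\ct{m}{}\wedge\df{\ct{m}{}}=0$ and, by Frobenius, the distribution orthogonal to $\ct{m}{^a}$ is integrable, which is equivalent to $\ctcn{\omega}{_{ab}}=0$. (Equivalently, one checks directly in the chart that $\ctcn{P}{^{\ubar1}_b}=0$, so the orthogonal projection of $\partial_{[c}\ct{m}{_{d]}}$ vanishes.) The same computation gives $\ctcna{\omega}{_{ab}}=0$ and $\ctcnh{\omega}{_{ab}}=0$.

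With all three vorticities zero, \cref{thm:operator-D} yields $\ct{t}{^a_{bc}}=0$, so \cref{eq:operator-D} collapses to $\cdt{_a}\ct{T}{_{b_1...b_n}^{c_1...c_q}}=\ct{D}{_a}\ct{T}{_{b_1...b_n}^{c_1...c_q}}$, which by that theorem depends only on the triplet $\cbrkt{\cdcn{_A},\cdcna{_{\Aa}},\cdcnh{_{\Ah}}}$ associated to the triad. It remains to show $\cdt{_a}$ equals the Levi-Civita connection $\cds{_a}$. For this I use \cref{thm:geom-id}: the identities \cref{eq:omegas-sigmas} express $\ctcn{\Sigma}{_{\ma\mh}}$, $\ctcna{\Sigma}{_{m\mh}}$ and $\ctcnh{\Sigma}{_{m\ma}}$ each as a sum of vorticity components, all of which are now zero; hence these mixed shear components vanish and $\ct{s}{^a_{bc}}=0$ by \cref{eq:shear-tensor-triad}. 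Substituting this into \cref{eq:operator-D-levi-civita-connection} gives $\cdt{_a}\ct{T}{_{b_1...b_n}^{c_1...c_q}}=\cds{_a}\ct{T}{_{b_1...b_n}^{c_1...c_q}}$ on $\U_p$, as claimed.

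The only step I expect to require genuine care is the vanishing of the vorticities: one must be certain that in a diagonal chart the normalised coordinate directions are truly hypersurface-orthogonal (the Frobenius / ``gradient up to rescaling'' argument) and that this fact, combined with the orthonormality of the triad, is exactly what \cref{eq:omegas-sigmas} needs in order to force the off-diagonal shears, and hence $\ct{s}{^a_{bc}}$, to vanish. Once that is in hand, everything else is a direct substitution into the formulas of \cref{thm:operator-D}.
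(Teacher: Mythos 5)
Your proposal is correct and follows essentially the same route as the paper: the same diagonalising triad from \cref{thm:classic}, vanishing vorticities giving $\ct{t}{^a_{bc}}=0$, and vanishing mixed shears giving $\ct{s}{^a_{bc}}=0$, whence $\cdt{_{a}}=\cds{_{a}}$ by \cref{thm:operator-D}. The only cosmetic difference is that the paper establishes the vanishing of the mixed shears by directly computing the Christoffel symbols $\cts{\Gamma}{^m_{\mh\ma}}$, $\cts{\Gamma}{^\mh_{m\ma}}$, $\cts{\Gamma}{^\ma_{\mh m}}$ and only mentions your route (setting the vorticities to zero in \cref{eq:omegas-sigmas}) as an alternative, which it explicitly endorses.
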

		\begin{proof}
			Using the coordinate chart $ \cbrkt{x^{\ubar{a}}} $ of \cref{thm:classic}, one has that the triad given by
				\begin{equation}
					\ct{m}{^{\ubar{a}}}=\frac{1}{\sqrt{\ms{_{\ubar{1} \ubar{1}}}}}\delta^{\ubar{a}}_{\ubar{1}}\ ,\quad
					\cta{m}{^{\ubar{a}}}=\frac{1}{\sqrt{\ms{_{\ubar{2} \ubar{2}}}}}\delta^{\ubar{a}}_{\ubar{2}}\ ,\quad
					\cth{m}{^{\ubar{a}}}=\frac{1}{\sqrt{\ms{_{\ubar{3} \ubar{3}}}}}\delta^{\ubar{a}}_{\ubar{3}}
				\end{equation}
			has vanishing vorticities --see also \cref{rmk:directions}--
				\begin{equation}
				\ctcn{\omega}{_{ab}}=\ctcna{\omega}{_{ab}}=\ctcnh{\omega}{_{ab}}=0\ .
				\end{equation}
			Then, $ \ct{t}{^a_{bc}}=0 $. According to $ \cref{thm:operator-D} $, this proves that the action of the associated $ \cts{\D}{_{a}} $ on an arbitrary tensor field only depends on the triplet of two dimensional connections. Also, a direct calculation shows that
				\begin{equation}
					 \cts{\Gamma}{^m_{\mh\ma}}\equp \cts{\Gamma}{^\mh_{m\ma}}\equp  \cts{\Gamma}{^\ma_{\mh m}}\equp 0\ .
				\end{equation}
			Using this, one has
				\begin{equation}
					\ctcn{\Sigma}{_{\ma\mh}}=\ctcna{\Sigma}{_{\mh m}}=\ctcnh{\Sigma}{_{\ma m}}=0\ ,
				\end{equation}
			which, alternatively, could have been proved by using $ \ct{t}{^a_{bc}}=0 $ in \cref{eq:omegas-sigmas}.
			Hence, $ \ct{s}{^a_{bc}}=0 $ and, as follows from \cref{thm:operator-D},
				\begin{equation}
					\cdt{_{a}}\ct{T}{_{b_1...b_n}^{c_1...c_q}}\equp\cds{_{a}}\ct{T}{_{b_1...b_n}^{c_1...c_q}}\ .
				\end{equation}
		\end{proof}
		\begin{remark}\label{rmk:levi-civitta}
			By this result, the action of the covariant derivative $ \cds{_{a}} $ defined by the Levi-Civita connection of $ \prn{\I,\ms{_{ab}}} $ is determined in every neighbourhood $ \U_{p} $ by the action of a triplet of two-dimensional connections. In principle, working by patches, one can cover $ \I $, however different charts would have to be used, meaning that the metric would be diagonalised by different triads  on different regions of $ \I $. Apart from that, it is interesting to notice that all the curvature-related quantities of $ \prn{\I,\ms{_{ab}}} $ can then be computed on each $ \U_{p} $ by the action of three two-dimensional connections.
		\end{remark}
		\begin{remark}\label{rmk:directions}
			An immediate consequence of  \cref{thm:classic} is that a triad  $\cbrkt{\ct{e}{^a_{i}}}= \cbrkt{\ct{m}{^a},\cta{m}{^a},\cth{m}{^a}} $ associated to the $ \cts{\D}{_{a}} $ that coincides with $ \cds{_{a}} $ on each $ \U_{p} $ according to \cref{thm:operator-D} reads
				\begin{align}
					\ct{m}{_{\ubar{a}}}&=\sqrt{\ms{_{\ubar{1} \ubar{1}}}}\dpart{\ubar{a}}x^{\ubar{1}}\ ,\\
					\cta{m}{_{\ubar{a}}}&=\sqrt{\ms{_{\ubar{2} \ubar{2}}}}\dpart{\ubar{a}}x^{\ubar{2}}\ ,\\
					\cth{m}{_{\ubar{a}}}&=\sqrt{\ms{_{\ubar{3} \ubar{3}}}}\dpart{\ubar{a}}x^{\ubar{3}}\ .
				\end{align}
			These forms, each one being proportional to a gradient,  have vanishing vorticities there. Thus, they characterise a set of 3 orthogonal \textit{foliations} on $ \U_{p} $. \\
		\end{remark}
		A natural thing to do is to introduce a connection associated to $ \cdt{_{a}} $ as
			\begin{equation}
				\cts{\gamma}{^a_{bc}}\defeq\cts{\Gamma}{^a_{bc}}+\ct{s}{^a_{bc}}\ ,
			\end{equation}
		such that
			\begin{equation}
				\cdt{_{a}}\ct{v}{^{b}}=\dpart{a}\ct{v}{^{b}}+\cts{\gamma}{^b_{ac}}\ct{v}{^{c}}.
			\end{equation}
		Some properties of $ \cdt{_{a}} $ that can be computed straightforwardly are
			\begin{align}
				\cdt{_{[a}}\cdt{_{b]}}f&=0\ ,\\
				\cdt{_{[a}}\ct{v}{_{b]}}&=\cds{_{[a}}\ct{v}{_{b]}}=\dpart{[a}\ct{v}{_{b]}}\ ,\\
				\cds{_{[a}}\cds{_{b]}}\ct{v}{_{c}}&=\cdt{_{[a}}\cdt{_{b]}}\ct{v}{_{c}}+ \cdt{_{[a}}\prn{\ct{s}{^r_{b]c}}}\ct{v}{_{r}}+\ct{s}{^f_{c[a}}\ct{s}{^r_{b]f}}\ct{v}{_{r}}\ \nonumber\\
				&=\cdt{_{[a}}\cdt{_{b]}}\ct{v}{_{c}}+ \cds{_{[a}}\prn{\ct{s}{^r_{b]c}}}\ct{v}{_{r}}+\ct{s}{^f_{c[b}}\ct{s}{^r_{a]f}}\ct{v}{_{r}}\ ,\\
				\cdt{_{a}}\ms{_{bc}}&=\ct{s}{^r_{bc}}\ms{_{ra}}\ .	
			\end{align}
		Let $ \cbrkt{\ct{e}{^a_{i}}} $ be a triad defining $ \cdt{_{a}} $. Then, for any $ \ct{v}{_{a}}=\ctcn{v}{_{A}}\ctcn{W}{^A_{a}} $, 
			\begin{equation}\label{eq:nabla-D-D}
				\ctcn{E}{^a_{A}}\ctcn{E}{^b_{B}}\cds{_{a}}\ct{v}{_{b}}=\ctcn{E}{^a_{A}}\ctcn{E}{^b_{B}}\cdt{_{a}}\ct{v}{_{b}}=\cdcn{_{A}}\ctcn{v}{_{B}}\ ,
			\end{equation}
		and the decorated version of this relation holds for $ \ct{v}{_{a}}=\ctcn{v}{_{\Ah}}\ctcnh{W}{^\Ah_{a}} $ and $ \ct{v}{_{a}}=\ctcn{v}{_{\Aa}}\ctcna{W}{^\Aa_{a}} $, respectively.\\
		
		To conclude this section, consider conformal transformations of the metric $ \ms{_{ab}} $,
						\begin{equation}\label{eq:gauge-ms}
							\msg{_{ab}}=\omega^2\ms{_{ab}}\ ,
						\end{equation}
					where $ \omega $ is a positive definite function. This change also affects the metric on $ \Scn $,
						\begin{equation}\label{eq:gauge-mcn}
							\mcng{_{AB}}=\omega^2\mcn{_{AB}}\ .
						\end{equation}
					Accordingly, the kinematic quantities \cref{eq:kinematic} transform too ---see Appendix E in \cite{Fernandez-AlvarezSenovilla2022b}. In particular, one has that $ \ctg{m}{^a}=\omega^{-1}\ct{m}{^a} $ and
						\begin{equation}
							\ctcng{\Sigma}{_{AB}}=\omega \ctcn{\Sigma}{_{AB}}\ ,\quad \ctcng{\omega}{_{AB}}=\omega\ctcn{\omega}{_{AB}}\ .
						\end{equation}
					This should be expected, since umbilicity and surface-orthogonality are conformally invariant properties. Thus, the conformal invariance of $ \ct{s}{^a_{bc}} $ and $ \ct{t}{^a_{bc}} $ follows,
						\begin{equation}\label{eq:gauge-s-t}
							\ctg{s}{^a_{bc}}=\ct{s}{^a_{bc}}\ ,\quad \ctg{t}{^a_{bc}}=\ct{t}{^a_{bc}}\ ,
						\end{equation}
					 and, also, this shows that  $ \cts{\gamma}{^a_{bc}} $ undergoes the same change as $ \cts{\Gamma}{^a_{bc}} $:
						\begin{equation}\label{eq:gauge-gamma-triad}
							\ctsg{\gamma}{^a_{bc}} \eqs \cts{\gamma}{^a_{bc}} +\cts{C}{^a_{bc}}\ ,\quad\cts{C}{^a_{bc}}=\frac{1}{\omega}\ms{^{at}}\prn{2\ms{_{
												t(b}}\cts{\omega}{_{c)}}-\ms{_{cb}}\cts{\omega}{_t}}\ ,
						\end{equation}
					where $ \cts{\omega}{_{a}}\defeq\cds{_{a}}\omega $. All the results above, including \cref{thm:geom-id,thm:operator-D,thm:corollary-levi-civita} are invariant under these kind of transformations.
		\subsection{The space of  connections}
			Different triads can lead to different operators $ \cdt{_{a}} $, according to \cref{thm:operator-D}. To study this, start with the following definition:
			\begin{deff}[Space of connections]
				Define the space $ \Xi $   as the set of all possible differential operators $ \cdt{_{a}} $ of \cref{thm:operator-D} on $ \I $ or on an open $ \Delta\in\I $\ .
			\end{deff}
		 From now on, no distinction between $ \I $ and $ \Delta\in\I $ will be made, but one has to take into account that these operators  may not be defined globally according to \cref{def:triad-connections}. The difference between any two elements of $ \Xi $ can be characterised as the difference of their action on an arbitrary form $ \ct{v}{_{b}} $ on $ \I $. From \cref{thm:operator-D} one has
				\begin{equation}
					\prn{\cdtp{_{a}}-\cdt{_{a}}}\ct{v}{_{b}}=\ct{d}{^e_{ab}}\ct{v}{_{e}}\ ,\quad\ct{d}{^e_{ab}}\defeq\ct{s}{^e_{ab}}-\ct{s'}{^e_{ab}}\ .
				\end{equation}
		To know how many functions have to be specified to determine a point in $ \Xi $, one just has to study the algebraic properties of $ \ct{d}{^a_{bc}} $. First note that lowering the contravariant index  $ \ct{s}{_{abc}}=\ms{_{da}}\ct{s}{^d_{bc}} $ one has
				\begin{properties}[label=\roman*)]
				\item $ \ct{s}{_{abc}}=\ct{s}{_{acb}} $\ . \label{it:propi} 
				\item $ \ct{s}{_{abc}}+\ct{s}{_{bca}}+\ct{s}{_{cab}}=0 $\ . \label{it:propii}
				\item $ \ms{^{ab}}\ct{s}{_{abc}}=0 $\ .\label{it:propiii}
				\item $ \ms{^{ac}}\ct{s}{_{abc}}=0 $\ .\label{it:propiv}
				\item $ \ct{s}{_{[abc]}}=0=\ct{s}{_{(abc)}} $\ .		\label{it:propv}
				\item $ \ct{P}{^{ab}}\ct{s}{_{abc}}=0=\ct{P}{^{ac}}\ct{s}{_{abc}} $, for any of the three projectors associated to the elements of the triad $\cbrkt{\ct{e}{^a_{i}}}= \cbrkt{\ct{m}{^{a}},\cta{m}{^{a}},\cth{m}{^{a}}} $ defining $ \ct{s}{_{abc}} $. \label{it:propvi}
				\end{properties}
			Not all the properties are independent from each other: from \cref{it:propi,it:propii,it:propiii}, one can derive \cref{it:propiv,it:propv}; whereas \ref{it:propvi} is independent from the others. \Cref{it:propi,it:propii,it:propiii} provide with 9, 10 and 3 constraints, respectively. And from \ref{it:propvi} one gets 3 additional independent equations, giving a total of 25 constraints that leave 2 independent components in $ \ct{s}{_{abc}} $. The same can be shown if one looks to \cref{eq:shear-tensor-triad}; it turns out that all the information that has to be specified is encoded in 3 scalars: $ \ctcn{\Sigma}{_{\ma\mh}}$, $\ctcna{\Sigma}{_{\mh m}}$ and $\ctcnh{\Sigma}{_{\ma m}} $. But recall that from \cref{thm:geom-id} one has \cref{eq:identity-shears}, which reduces the number of independent components to just 2 functions. However, this counting does not apply directly to $ \ct{d}{_{abc}}=\ms{_{da}}\ct{d}{^{d}_{bc}} $. The reason is that while \cref{it:propi,it:propii,it:propiii,it:propiv,it:propv} are satisfied by this tensor too, \cref{it:propvi} does not. Still, it is possible to find three additional independent constraints. One proceeds first by writing $ \ct{d}{_{abc}} $ using a generic basis $ \cbrkt{\ct{\tilde{e}}{^a_{i}}} $ --with $ i $ running from 1 to 3--,
				\begin{align}
					\ct{d}{_{ijk}}=&2\ct{\Sigma}{}\ct{\lambda}{^3_{i}}\ct{\lambda}{^1_{(j}}\ct{\lambda}{^2_{k)}}+2\cta{\Sigma}{}\ct{\lambda}{^1_{i}}\ct{\lambda}{^2_{(j}}\ct{\lambda}{^3_{k)}}+2\cth{\Sigma}{}\ct{\lambda}{^2_{i}}\ct{\lambda}{^1_{(j}}\ct{\lambda}{^3_{k)}}\nonumber\\
					-&2\ct{\Sigma'}{}\ct{\omega}{^3_{i}}\ct{\omega}{^1_{(j}}\ct{\omega}{^2_{k)}}-2\ctap{\Sigma}{}\ct{\omega}{^1_{i}}\ct{\omega}{^2_{(j}}\ct{\omega}{^3_{k)}}-2\cthp{\Sigma}{}\ct{\omega}{^2_{i}}\ct{\omega}{^1_{(j}}\ct{\omega}{^3_{k)}}\ ,
				\end{align}
			where $ \ct{\lambda}{^i_{j}} $ and $ \ct{\omega}{^i_{j}} $ relate the basis $ \cbrkt{\ct{\tilde{e}}{^a_{i}}} $ with $\cbrkt{ \ct{e}{^a_{i}}}=\cbrkt{\cta{m}{^{a}},\cth{m}{^a},\ct{m}{^a}} $ and $\cbrkt{ \ct{e'}{^a_{i}}}= \cbrkt{\ctap{m}{^{a}},\cthp{m}{^a},\ct{m'}{^a}}  $, respectively:
				\begin{align}
					\ct{\tilde{e}}{^a_{i}}&=\ct{\lambda}{^j_i}\ct{e}{^a_{j}}\ ,\\
					\ct{\tilde{e}}{^a_{i}}&=\ct{\omega}{^j_i}\ct{e'}{^a_{j}}\ .
				\end{align}
			 Each of these transformation matrices can be expressed in terms of three Euler angles\footnote{The conventions for the definition of the Euler angles  of \cite{Marion2004} are used.}; denote them by $ \cbrkt{\phi,\theta,\psi} $  and $ \cbrkt{\beta,\gamma,\alpha} $, each set defining $ \ct{\lambda}{^i_{j}} $ and $ \ct{\omega}{^i_{j}} $, respectively. Then, one can express explicitly  the components $ \ct{d}{_{ijk}} $ as functions of $ \Sigma, \cta{\Sigma}{}, \cth{\Sigma}{},\ct{\Sigma'}{}, \ctap{\Sigma}{}, \cthp{\Sigma}{}$ and the two sets of Euler angles. Observe that the following decomposition applies,
			 	\begin{equation}
			 		\ct{d}{_{ijk}}=\ct{A}{_{ijk}}\prn{\phi,\theta,\psi,\Sigma,\cta{\Sigma}{},\hat{\Sigma}}+\ct{B}{_{ijk}}\prn{\beta,\gamma,\alpha,\ct{\Sigma'}{},\ctap{\Sigma}{},\cthp{\Sigma}{}}\ .
			 	\end{equation}
			 Using this, a direct algebraic manipulation shows that\footnote{\Cref{eq:identity-shears} has to be used to derive \cref{eq:constraint3}.}
			 	\begin{align}
			 		\prn{\dpart{\psi}+\dpart{\alpha}}\ct{d}{_{133}}&=-\ct{d}{_{211}}\ ,\label{eq:constraint1}\\
			 		\prn{\dpart{\psi}+\dpart{\alpha}}\ct{d}{_{123}}&=-\ct{d}{_{322}}\ ,\label{eq:constraint2}\\
			 		\ct{d}{_{231}}&=-\ct{d}{_{123}}-2\int\ct{A}{_{322}}\df\psi-2\int\ct{B}{_{322}}\df\alpha \label{eq:constraint3} .
			 	\end{align}
				Summing up, \cref{it:propi,it:propii,it:propiii,it:propiv,it:propv} can be used to reduce the independent components to: $ \ct{d}{_{123}} $, $ \ct{d}{_{231}} $, $ \ct{d}{_{133}} $, $ \ct{d}{_{211}} $, $ \ct{d}{_{322}} $. Now, using \cref{eq:constraint1,eq:constraint2,eq:constraint3}, it is enough to specify 2 components ---$ \ct{d}{_{133}} $ and $ \ct{d}{_{123}} $--- to determine the other 3. These properties can be straightforwardly generalised to any linear combination of tensor fields  $ \ct{d}{_{abc}} $ with constant coefficients.
			  \\	
			  		
			  It is interesting to see that $ \Xi $ has the structure of an affine space\footnote{See \cite{Berger1987} for example for the definition and nomenclature used here.} with underlying vector space $ \vec{\Xi} $ (the vector space of tensors $ \ct{d}{^a_{bc}} $) over $ \mathbb{R} $. One can define the map $ \Phi $ of `translations' of $ \Xi $ as --omitting the indices in parenthesis containing $ \ct{d}{^a_{bc}} $ and $ \cdt{_{a}} $, for clearness--
		  		\begin{equation}
			  		\Phi(\overline{\D},d)_{a}\ct{v}{_{b}}=\cdt{_{a}}\ct{v}{_{b}}+\ct{d}{^e_{ab}}\ct{v}{_{e}}\ . 
		  		\end{equation}
		  	And for all $ \cdt{_{a}} $ and $ \cdtp{_{a}} $ in $ \Xi $ there exists another map $ d $: $ \Xi\times\Xi\rightarrow\vec{\Xi} $ such that $ \cdtp{_{a}}\ct{v}{_{b}}=\cdt{_{a}}\ct{v}{_{b}}+\ct{d(\cdtp{},\cdt{})}{^e_{ab}}\ct{v}{_{e}} $. Observe that the symbol $ \ct{d(\cdtp{},\cdt{})}{^e_{ab}} $ is used in an abuse of notation, since $ \ct{d}{^e_{ab}} $ are the elements of $ \vec{\Xi} $. Notice also that, for any $ \cdt{_{a}} $, $ \cdtp{_{a}} $ and $ \ct{{\overline{\D}''}}{_{a}} $ in $ \Xi $,
		  		\begin{equation}
		  			\prn{\ct{d( \cdtp{},\cdt{})}{^e_{ab}}+\ct{d(\cdt{},\ct{{\overline{\D}''}}{})}{^e_{ab}}}\ct{v}{_{e}}=\ct{d( \cdtp{},\ct{{\overline{\D}''}}{})}{^e_{ab}}\ct{v}{_{e}}\ .
		  		\end{equation}
		  		\begin{figure}[h!]
		  			\centering
			  		\includegraphics[scale=0.3]{./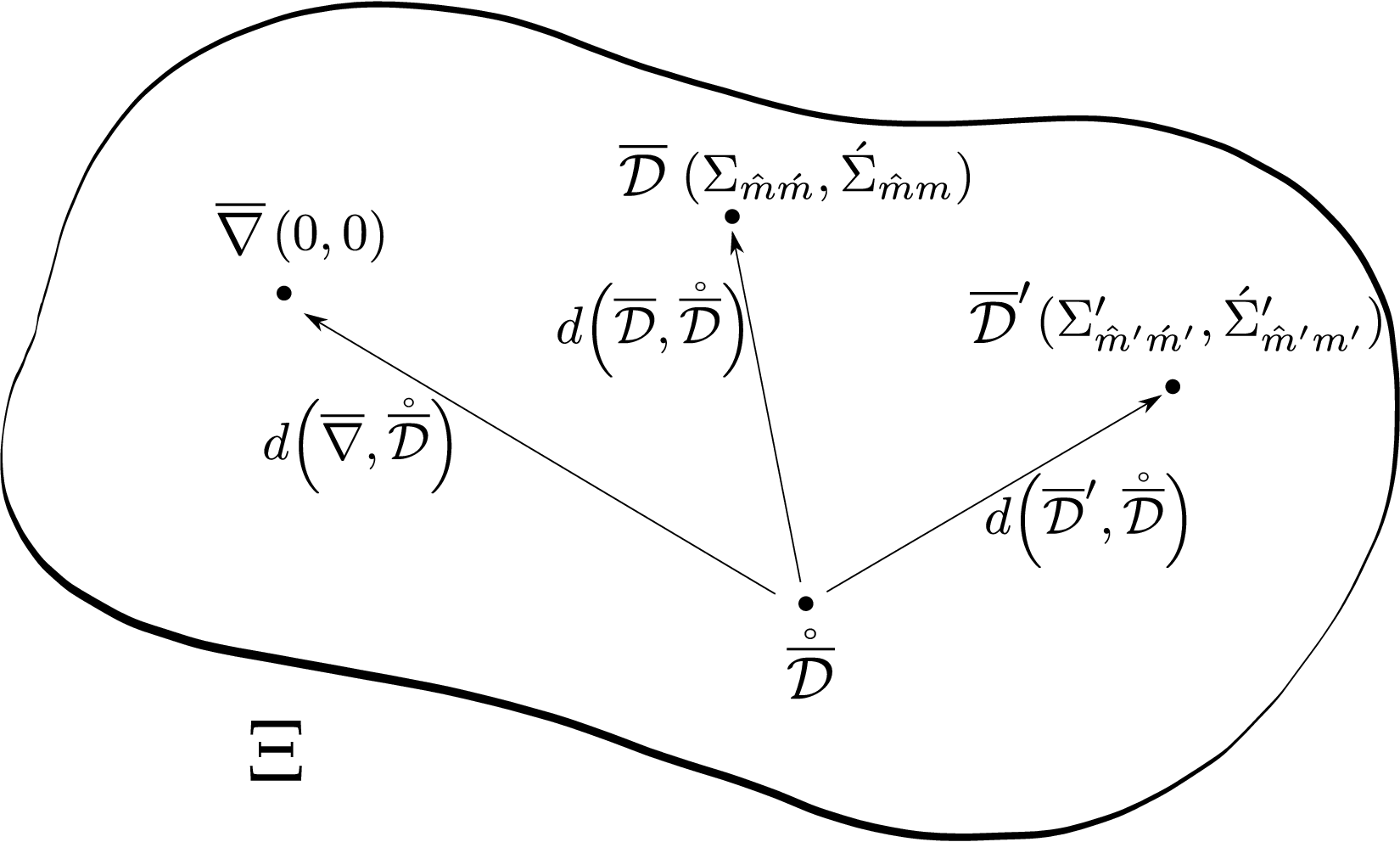}
			  		\caption{The space $ \Xi $ of differential operators is represented. Each $ \cdt{_{a}} $ is a point distancing from an arbitrary origin $ \mathring{\overline{\mathcal{D}}}_{a} $ by $ d\prn{\cdt{},\mathring{\mathcal{D}}_{}} $, the map to the vector space $ \vec{\Xi} $ of tensors $ \ct{d}{^a_{bc}} $, 	determined by two functions ---either one specifies $ \ct{d}{_{123}} $ and $ \ct{d}{_{133}} $ in an arbitrary basis, as functions of the shears and Euler angles, or considers $ \cto{s}{^a_{bc}} $ fixed and then has to specify the two components that determine $ \ct{s}{^a_{bc}} $. Independently of the choice of origin, each point can be labelled by the components of the shears $ \ct{\Sigma}{_{\ma\mh}} $ and $ \cta{\Sigma}{_{m\mh}} $, which can be viewed as `coordinates' on $ \Xi $. The point that represents the Levi-Civita connection $ \cds{_{a}} $ has coordinates $ \prn{0,0} $.}\label{fig:xi-space}
		  		\end{figure}
			Also, one can fix an origin\footnote{The choice of origin is arbitrary, so `vectorialising' $ \Xi $ can be seen as a gauge fixing from the physical point of view.} $ \cdto{_{a}} $ in $ \Xi $  ---see \cref{fig:xi-space}. Then, any other $ \cdt{_{a}} $ is characterised with respect to this origin by a tensor field $ \ct{s}{^a_{bc}} $, fulfilling \cref{it:propi,it:propii,it:propiii,it:propiv,it:propv,it:propvi} ---no matter the particular choice one does, the number of functions that coordinatize $ \Xi $ is two. One could think that the natural choice of origin in $ \Xi $ \emph{on each neighbourhood} $ \U_{p} $ is given by the kind of triad of \cref{rmk:directions} that ---at least locally--- diagonalises the metric, for which one has $ \ct{\mathring{s}}{^a_{bc}}\equp 0 $, $ \cdt{_{a}}\equp\cds{_{a}} $:
				\begin{equation}
				\prn{\cdto{_{a}}-\cdt{_{a}}}\ct{v}{_{b}}\equp\ct{s}{^e_{ab}}\ct{v}{_{e}}\ .
				\end{equation}
%			This has the inconvenient of not being the natural choice globally, in general, as $ \ct{\mathring{s}}{^a_{bc}} $ for a fixed $ \cbrkt{\ct{e}{^a_{i}}} $ might not vanish ---if it exists--- outside the given $ \U_{p}$.
				\begin{remark}\label{rmk:onto}
					Importantly, one has that a triad $ \cbrkt{\ct{e}{^a_{i}}}=\cbrkt{\ct{m}{^{a}},\cta{m}{^{a}},\cth{m}{^{a}}}  $ defines a unique $ \cts{\D}{_{a}} $, but the contrary is not true. Choosing $ \ct{\omega}{_{i}^j}=\delta_{i}^j $, one has that
						\begin{equation}
						\phi=a\frac{\pi}{2}\ ,\quad \theta=b\frac{\pi}{2}\ ,\quad\psi=c\frac{\pi}{2}\ ,
						\end{equation}
					with $ a $, $ b $, $ c $ natural numbers, leads to $ \ct{d}{_{abc}}=0 $. That, is  two triads differing by a flip of sign and/or a swap of one or more elements produce the same $ \ct{s}{^a_{bc}} $, and hence the same $ \cts{\D}{_{a}} $. Hence, one has that the map $ \varphi $ from the space of possible triads $ \Upsilon $ to the set of triplets of connections $ \Xi $
					\begin{align}
						\varphi:\Upsilon&\longrightarrow\Xi&\nonumber\\
						\cbrkt{\ct{e}{^a_{i}}}&\longrightarrow\cts{\D}{_{a}}&\nonumber
					\end{align}			
					is a surjection ---see \cref{fig:surjection}. It is possible, indeed, to have more general transformations of a given triad that do not modify the associated $ \cdt{_{a}} $; an example is having two different triads for which $ \ct{s}{^a_{bc}}=0 $. This is illustrated in the case of the C-metric in \cref{ssec:special-case}.
				\end{remark}
			\begin{figure}[h!]
							\centering
							\includegraphics[scale=0.35]{./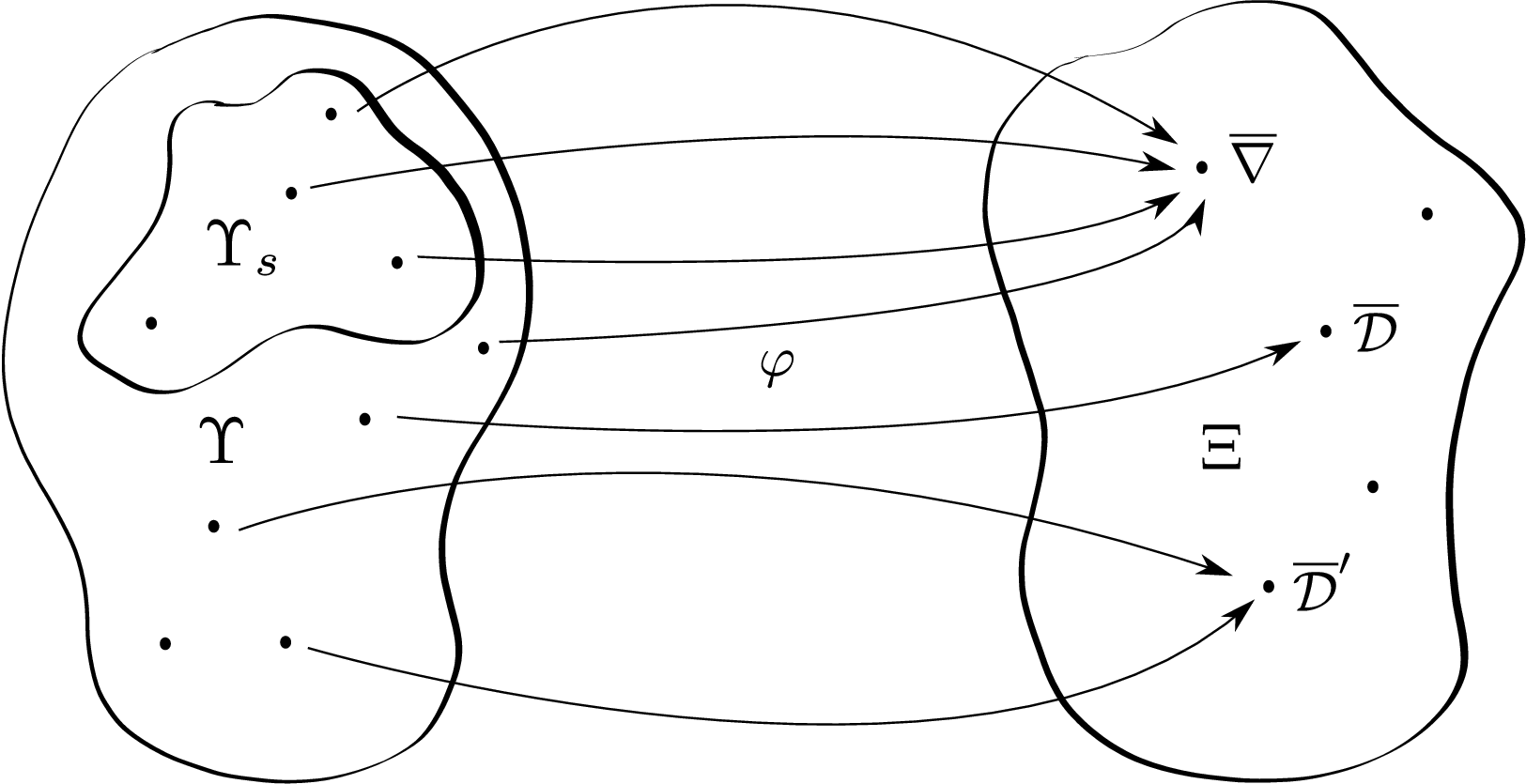}
							\caption{The surjection $ \varphi $ maps triads of vector fields in $ \Upsilon $ to differential operators in $ \Xi $. There is a subset $ \Upsilon_{s}\subset\Upsilon $ of triads whose image is $ \cds{_{a}} $ and their elements are surface-orthogonal ---i.e., they have $ \ct{t}{^a_{bc}}=0 $. In a neighbourhood of each point in $ \I $, it is always possible to find a triad that diagonalises the metric and that, according to \cref{thm:corollary-levi-civita}, belongs to $ \Upsilon_{s} $. There might be, however, triads that do not belong to $ \Upsilon_{s} $ but are mapped to $ \cds{_{a}} $ ---those having $ \ct{s}{^a_{bc}}=0 $ but $ \ct{t}{^a_{bc}}\neq 0 $.}\label{fig:surjection}
						\end{figure}	
			Introduce now the following subset of triads,
				\begin{deff}
					The strict set of triads is denoted by $ \Upsilon_{s}\subset\Upsilon $ and consists of those triads $  \cbrkt{\ct{e}{^a_{i}}} $ for which $ \ct{t}{^a_{bc}}=0 $.
				\end{deff}
			 This subset $ \Upsilon_{s} $ is of interest as it can be put in connection with the existence of the so called first components of news --see  \cref{sec:infinity} and \cref{thm:news-triads} later. Observe that by \cref{thm:geom-id} one has that
			 	\begin{equation}
			 		\ct{t}{^a_{bc}}=0\implies \ct{s}{^a_{bc}}=0
			 	\end{equation} 
			 and therefore that the restriction $ \varphi_{s} $ of $ \varphi $ acting on $ \Upsilon_{s} $ maps every triad to $ \cds{_{a}} $,
%				\begin{equation}
%					\forall \cbrkt{\ct{e}{^a_{i}}}\in \Upsilon_{s}\ ,\quad \varphi: \cbrkt{\ct{e}{^a_{i}} }\longrightarrow \cds{_{a}}\ .
%				\end{equation}
				\begin{align}
					\varphi_{s}:\Upsilon_{s}&\longrightarrow\Xi&\nonumber\\
					\cbrkt{\ct{e}{^a_{i}}}&\longrightarrow\cds{_{a}}&\nonumber
				\end{align}
			\begin{remark}\label{rmk:sset-notempty}
				For any $ C^\infty $ metric $ \ms{_{ab}} $, one has  $ \Upsilon_{s}\neq \emptyset $, since the triad given by \cref{rmk:directions}, which diagonalises the metric according to \cref{thm:classic}, has $ \ct{t}{^a_{bc}}=0 $ and, consequently, belongs to $ \Upsilon_{s}$.
			\end{remark}	
				
			 As a final note, consider the conformal rescalings \eqref{eq:gauge-ms}. While  $ \cdt{_{a}} $ changes according to \cref{eq:gauge-gamma-triad}, `distances' between points in $ \Xi $ are preserved, that is:
				\begin{equation}\label{eq:gauge-distance-xi}
					\ct{d(\tilde{\cdtp{}},\tilde{\cdt{}})}{^e_{ab}}=\ct{d(\cdtp{},\cdt{})}{^e_{ab}}\ ,
				\end{equation}
			as follows from \cref{eq:gauge-s-t}.
	\section{Radiative degrees of freedom}\label{sec:infinity}
	
		Results presented so far apply to general three-dimensional Riemannian manifolds, some of them requiring a $ C^\infty $ metric. This second part of the article is devoted to show their connection with the characterisation of gravitational radiation in full General Relativity with a positive cosmological constant. Specifically, the aim is to work out a strategy for determining the radiative degrees of freedom at infinity. To that end, consider a physical space-time\footnote{Greek letters are used for abstract indices of quantities in four dimensional space-time. The signature of the metric is chosen to be $ \prn{-,+,+,+} $.} $ \prn{\hat{M},\pt{g}{_{\alpha\beta}}} $ admitting a conformal completion à la Penrose $ \prn{M,\ct{g}{_{\alpha\beta}}} $, where the unphysical metric is related to the physical one by $ \ct{g}{_{\alpha\beta}}=\Omega^2\pt{g}{_{\alpha\beta}} $. The conformal factor $ \Omega $ is strictly positive in $ \hat{M} $, vanishes at the conformal boundary\footnote{For a detailed description see e.g. \cite{Kroon2016}.}  $ \scri $ and is not unique; there exist the freedom of rescaling $ \Omega $ by a positive definite function $ \omega $. This ambiguity is gauge, and as such, physical statements must not depend on the particular fixing\footnote{From this point of view, the fact that these transformations preserve $ \Xi $ ---in the sense of \cref{eq:gauge-distance-xi}--- is a favourable feature.}. As already anticipated in the introduction, the case of interest to this work is that of $ \pt{g}{_{\alpha\beta}} $ being a solution to the Einstein Field Equations with a positive cosmological constant $ \Lambda> 0 $. In this context, $ \scri $ is a three-dimensional space-like hypersurface with metric\footnote{To be considered $ C^\infty $ when necessary for the application of \cref{thm:classic} and \cref{thm:corollary-levi-civita}.} $ \ms{_{ab}} $ ---that transforms as \cref{eq:gauge-ms} under gauge changes---  and normal $ \ct{N}{_{\alpha}}\defeq\cd{_{\alpha}}\Omega $. In general, $ \scri $ has a past and a future component that are disconnected. All the results in this paper apply equally to both of them, hence no distinction will be made.\\
		
		 Under this set-up,	it is well known that the Weyl tensor\footnote{The conventions for the curvature tensor are $ \ct{R}{_{\alpha\beta\gamma}^\mu}\ct{v}{_{\mu}}=\prn{\cd{_{\alpha}}\cd{_{\beta}}-\cd{_\beta}\cd{_{\alpha}}}\ct{v}{_{\gamma}} $ and $ \ct{R}{_{\alpha\beta}}=\ct{R}{_{\alpha\mu\beta}^\mu} $.} $ \ct{C}{_{\alpha\beta\gamma}^\delta} $ vanishes at $ \scri $ and that the rescaled version
			\begin{equation}
				\ct{d}{_{\alpha\beta\gamma}^\delta}\defeq \Omega^{-1}\ct{C}{_{\alpha\beta\gamma}^\delta}
			\end{equation}
		is regular and in general different from zero there.  The Bel-Robinson tensor \cite{Bel1958} $ \ct{\T}{_{\alpha\beta\gamma\delta}} $ vanishes at $ \scri $ too, so one constructs a rescaled version as
			\begin{equation}
				\ct{\D}{_{\alpha\beta\gamma\delta}}=\ct{d}{_{\alpha\mu\gamma}^\nu}\ct{d}{_{\delta\nu\beta}^\mu}+\ctr{^*}{d}{_{\alpha\mu\gamma}^\nu}\ctr{^*}{d}{_{\delta\nu\beta}^\mu}\ ,
			\end{equation}
		where the star denotes the hodge dual operation on the first couple of covariant indices. With it, one defines the asymptotic supermomentum as
			\begin{equation}
				\ct{p}{^\alpha}\defeq-\ct{N}{^\mu}\ct{N}{^\nu}\ct{N}{^\rho}\ct{\D}{_{\mu\nu\rho}^\alpha}\ .
			\end{equation}
		Based on this object, a new way of characterising gravitational radiation in full General Relativity at $ \scri $ in the presence of a non-negative $ \Lambda $ was put forward in \cite{Fernandez-Alvarez_Senovilla2020a,Fernandez-AlvarezSenovilla2020b} and later developed in \cite{Fernandez-AlvarezSenovilla2022a,Fernandez-AlvarezSenovilla2022b} ---for a review, see \cite{Senovilla2022}. For $ \Lambda> 0 $, $ \ct{N}{^\alpha} $ is future-pointing and timelike in a neighbourhood of $ \scri $, and there one can work instead with the unit vector field $ \ct{n}{^\alpha}\defeq N^{-1}  \ct{N}{^\alpha}$, with $ -N^2=\ct{N}{_{\mu}}\ct{N}{^\mu} $. All the information of the rescaled Weyl tensor at $ \scri $ is contained in its electric and magnetic parts defined with respect to this vector field,
			\begin{align}
				\ct{D}{_{ab}}&\defeqs\ct{e}{^\alpha_{a}}\ct{e}{^\beta_{b}} \ct{n}{^\mu}\ct{n}{^\nu}\ct{d}{_{\mu\alpha\nu\beta}}\ ,\label{eq:electric}\\
				\ct{C}{_{ab}}&\defeqs\ct{e}{^\alpha_{a}}\ct{e}{^\beta_{b}} \ct{n}{^\mu}\ct{n}{^\nu}\ctr{^*}{d}{_{\mu\alpha\nu\beta}}\ ,
			\end{align}
			where $ \cbrkt{\ct{e}{^\alpha_{a}}} $ is any set of linearly independent vector fields at $ \scri $, orthogonal to $ \ct{n}{_{\alpha}} $.
			One can also introduce a `canonical' version of the asymptotic supermomentum,
			\begin{equation}
				\ct{\P}{^\alpha}=-\ct{n}{^\mu}\ct{n}{^\nu}\ct{n}{^\rho}\ct{\D}{_{\mu\nu\rho}^\alpha} ,
			\end{equation}
			 and write the decomposition
			\begin{equation}
				\ct{\P}{^\alpha}\eqs \ct{N}{^\alpha}\W+\ct{e}{^\alpha_{a}}\cts{\P}{^a}.
			\end{equation}
		 The tangent part $ \cts{\P}{^a} $ to $ \scri $ is called asymptotic super-Poynting vector field, and the scalar $ \W $, asymptotic superenergy density\footnote{Supernergy formalism shares many properties with electromagnetism \cite{Maartens1998}. For a general treatment, see \cite{Senovilla2000}. For the precise content used in the study of gravitational radiation at infinity, see section 2 of \cite{Fernandez-AlvarezSenovilla2022b} ---and references therein for other applications too.}.  Using $ \cts{\P}{^a} $, there is a way of determining the presence of gravitational radiation at $ \scri $ in full General Relativity that we recall here\footnote{The criterion has an alternative, more geometric description in terms of principal directions of the rescaled Weyl tensor that is not going to be used here. For more details on this, see \cite{Fernandez-AlvarezSenovilla2022b,Senovilla2022}.}:
			\begin{crit}[Asymptotic gravitational-radiation condition with $ \Lambda>0 $]\label{criterionGlobal}
					Consider a three-dimensional open connected subset $ \Delta \subset \scri $.
					There is no radiation on $ \Delta $  if and only if the asymptotic super-Poynting vanishes there 
					\begin{equation*}
			 \cts{\Pc}{^a}\eqsopen 0 \iff \text{No gravitational radiation on }\Delta .
					\end{equation*}
			\end{crit}
		\begin{remark}
			The radiation condition is equivalent stated in terms of the commutator of the electric and magnetic parts of the rescaled Weyl tensor,
				\begin{equation}\label{eq:commutator}
					\commute{D}{C}_{ab}=0 \iff \cts{\P}{^a}=0.
				\end{equation}
		\end{remark}
		Importantly, \cref{criterionGlobal} is invariant under conformal gauge transformations \eqref{eq:gauge-ms}. \\
		
		This characterisation is the first clue in the itinerary towards the isolation of the radiative degrees of freedom of the gravitational field; the second one is the results by Friedrich  \cite{Friedrich1986a,Friedrich1986b}, by means of which a three-dimensional Riemannian manifold endowed with a conformal class of metrics and a traceless and divergence-free (TT) tensor constitute initial/final data that determine a solution of the $ \Lambda $-vacuum Einstein field equations. The Riemannian manifold turns out to be the conformal boundary, and the TT-tensor, the electric part \eqref{eq:electric} of the rescaled Weyl tensor. Hence, all the available information about gravitational radiation has to be encoded in the triplet $ \prn{\scri,\ms{_{ab}},\ct{D}{_{ab}}} $. An additional insight comes form calculating the magnetic part of $ \ct{d}{_{\alpha\beta\gamma}^\delta} $ at $ \scri $, which shows that
			\begin{equation}\label{eq:cotton-york}
				\ct{C}{_{ab}}=-\sqrt{\frac{3}{\Lambda}}\ct{Y}{_{ab}}\ ,
			\end{equation}	
		with $ \ct{Y}{_{ab}} $ being the Cotton-York tensor of $ \prn{\scri,\ms{_{ab}}} $. This proves that $ \ct{C}{_{ab}} $ is completely determined by the curvature of  $ \prn{\scri,\ms{_{ab}}} $, as
			\begin{equation}
				\ct{Y}{_{ab}}=-\frac{1}{2}\ct{\epsilon}{_{a}^{cd}}\cds{_{[c}}\cts{S}{_{d]b}}
			\end{equation}
		where the volume form $ \ct{\epsilon}{_{abc}} $ of $ \scri $ and the intrinsic Schouten tensor $ \cts{S}{_{ab}} $ have been introduced. Therefore, gravitational radiation is determined by an interplay ---\cref{eq:commutator}--- between the intrinsic geometry of $ \prn{\scri,\ms{_{ab}}} $ and an extra piece of information ---the TT-tensor $ \ct{D}{_{ab}} $. This is the perspective adopted here and largely illustrated in \cite{Fernandez-AlvarezSenovilla2022b}.\\
		
		Yet another point that deserves attention is the existence of a `first component of news' $ \ctcn{V}{_{AB}} $ associated to a foliation given by some $ \ct{m}{^{a}} $. It has been shown in \cite{Fernandez-AlvarezSenovilla2022b} that this object must be part of any possible news tensor.  It is convenient to recall some definitions and results from that work.	
			\begin{deff}[Equipped $ \scri $]\label{def:additional-structure}
		 		An open, connected, subset $ \Delta\subset \scri $ with the same topology than $ \scri $ is said to be {\em equipped}  when it is endowed with a congruence $ \C $ of curves characterised by a unit vector field $ \ct{m}{^a} $. The projected surface $ \Scn\defeq \Delta/\C $ and $ \C $ are characterised by the conformal family of pairs
		 		\begin{equation}\label{eq:pairs-additional-structure}
		 		\prn{\ctcn{P}{_{ab}},\ct{m}{_a}}\spacef,
		 		\end{equation}
		 		where $ \ctcn{P}{_{ab}} $ is the projector to $ \Scn $. Two members belong to the same family if and only if $ \prn{\ctcn{P'}{_{ab}},\ct{m'}{_a}}=\prn{\Psi^2\ctcn{P}{_{ab}},\Psi\ct{m}{_a}} $, where $ \Psi $ is a positive function on $ \scri $.
		 	\end{deff}
		 	\begin{deff}[Strictly equipped $ \scri $]\label{def:additional-structure1}
 			 		One says  that $ \scri $ is {\em strictly equipped} when it is equipped and the unit vector field $ \ct{m}{^a} $ is surface-orthognal, providing a foliation by cuts.\footnote{By `cut' one means any two-dimensional Riemannian manifold with induced metric $ \mc{_{AB}} $.}
 			\end{deff}
 		When considering a foliation ($ \ctcn{\omega}{_{ab}}=0 $), one can always write
			\begin{equation}\label{eq:m-congruences-sec}
			\ct{m}{_a}=F\cds{_a}v\quad\text{with }	\frac{1}{F}=\lied_{_{\vec{m}}}v,
			\end{equation}
		where each leaf $ \Sc_{v}$ is labelled by a constant value of the parameter $ v $ along the curves.  Next, define $ \ctcn{S}{_{AB}}\defeq\ctcn{E}{^a_{A}}\ctcn{E}{^b_{B}}\cts{S}{_{ab}} $ and introduce the following combination --- where $ \ctcn{\Sigma}{}^2\defeq\ctcn{\Sigma}{_{AB}}\ctcn{\Sigma}{^{AB}} $ and $ \ctcn{\omega}{}^2\defeq\ctcn{\omega}{_{AB}}\ctcn{\omega}{^{AB}} $:
			\begin{align}
				\ctcn{U}{_{AB}}&\defeq \ctcn{S}{_{AB}}+\frac{1}{2}\cscn{\kappa}\ctcn{\Sigma}{_{AB}}+\ct{L}{_{AB}}\spacef,\label{eq:UABdef-congruences}\\
				\ct{L}{_{AB}}&\defeq \prn{\frac{1}{8}\cscn{\kappa}^2-\frac{1}{4}\cscn{\Sigma}^2+\frac{3}{4}\cscn{\omega}^2}\mcn{_{AB}}\spacef.\label{eq:LAB-congruences}
			\end{align}
		Two key results read as follows:
			\begin{corollary}[The tensor field $ \rho $ for strictly equipped $\scri$ with $\mathbb{S}^2$ leaves]\label{thm:rho-tensor-foliation}
				Assume $\scri$ is strictly equipped with $ v $ the parameter along the curves and such that \cref{eq:m-congruences-sec} holds. If the leaves have $ \mathbb{S}^2 $-topology, there is a unique tensor field $ \ctcn{\rho}{_{ab}} $  on $ \scri $ orthogonal to $ \ct{m}{^a} $ (equivalently, a one-parameter family of symmetric tensor fields $ \ctcn{\rho}{_{AB}}\prn{v}\defeq \ctcn{E}{^a_A}\ctcn{E}{^b_B}\ctcn{\rho}{_{ab}} $ on the projected surface $ \Scn $) whose behaviour under conformal rescalings \eqref{eq:gauge-mcn} is
					\begin{equation}\label{eq:gauge-behaviour-appropriate-foliations}
							\ctcng{\rho}{_{AB}}=\ctcn{\rho}{_{AB}}-a\frac{1}{\omega}\cdcn{_A}\ctcn{\omega}{_B}+\frac{2a}{\omega^2}\ctcn{\omega}{_A}\ctcn{\omega}{_B}-\frac{a}{2\omega^2}\ctcn{\omega}{_C}\ctcn{\omega}{^C}\mcn{_{AB}}\ ,
					\end{equation}
				where $ \ctcn{\omega}{_{A}}\defeq\cdcn{_{A}}\omega $, $ a\in\mathbb{R} $, and satisfies the equation
				 	\begin{equation}\label{eq:rho-diff-eq-foliation}
					\ctcn{P}{^d_a}\ctcn{P}{^e_b}\ctcn{P}{^f_c}\cds{_{[f}}\ctcn{\rho}{_{d]e}}= 0
					\end{equation}
				in any conformal frame. This tensor field must have a trace $ \ctcn{\rho}{^e_e}\defeq \ctcn{P}{^{ae}}\ctcn{\rho}{_{ae}}=a\cscn{K} $ and reduces, at each leaf, to the corresponding tensor $ \ct{\rho}{_{AB}} $ with all its properties. 
			In particular, it is given for the round-sphere one-parameter family of metrics by $ \ctcn{\rho}{_{ab}}=\ctcn{P}{_{ab}}a\cscn{K}/2 $.
			\end{corollary}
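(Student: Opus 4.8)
The plan is to obtain the statement as a \emph{leaf-wise} application of the construction of the $\rho$-tensor on a single cut --- established in \cite{Fernandez-AlvarezSenovilla2022b} --- using that a strictly equipped $\scri$ (\cref{def:additional-structure1}) is foliated by two-dimensional cuts $\Sc_v$, together with the reduction of the projected three-dimensional derivative to the intrinsic one recorded in \cref{eq:intrinsicdevS2}. Concretely: on each $\Sc_v$ produce $\ct{\rho}{_{AB}}$ from the single-cut result; glue the family $\ctcn{\rho}{_{AB}}(v)$ into a tensor field $\ctcn{\rho}{_{ab}}$ on $\scri$ orthogonal to $\ct{m}{^a}$; and then promote its leaf-wise properties (trace, curl-free equation, conformal behaviour, round-sphere value) to statements on $\scri$.

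\textbf{Construction and uniqueness.} Since $\ct{m}{^a}$ is surface-orthogonal, $\ctcn{\omega}{_{ab}}=0$ and, by \cref{eq:m-congruences-sec}, $\scri$ is foliated by the cuts $\Sc_v$, each carrying the conformal class of metrics $\mcn{_{AB}}(v)$ --- well defined as in Appendix A of \cite{Fernandez-AlvarezSenovilla2022b} --- and, by hypothesis, diffeomorphic to $\mathbb{S}^2$. I would quote the single-cut result: on a cut with $\mathbb{S}^2$-topology there is, for each fixed $a\in\mathbb{R}$, a \emph{unique} symmetric $\ct{\rho}{_{AB}}$ with $\mcn{^{AB}}\ct{\rho}{_{AB}}=a\cscn{K}$, which is curl-free, $\cdcn{_{[A}}\ct{\rho}{_{B]C}}=0$, transforms under $\mcng{_{AB}}=\omega^2\mcn{_{AB}}$ by the inhomogeneous rule in the statement, and equals $\tfrac{a}{2}\cscn{K}\mcn{_{AB}}$ on the round sphere. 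Applying this on every $\Sc_v$ gives the one-parameter family $\ctcn{\rho}{_{AB}}(v)$; uniqueness on each leaf together with the smooth dependence of $\mcn{_{AB}}(v)$ on $v$ makes the family smooth in $v$. Then set $\ctcn{\rho}{_{ab}}\defeq\ctcn{W}{_a^A}\ctcn{W}{_b^B}\ctcn{\rho}{_{AB}}$, a symmetric field on $\scri$ orthogonal to $\ct{m}{^a}$; conversely any such field is fixed by its leaf components $\ctcn{E}{^a_A}\ctcn{E}{^b_B}\ctcn{\rho}{_{ab}}$, so uniqueness on $\scri$ descends from uniqueness on each cut.

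\textbf{The three-dimensional equation.} Here \cref{eq:intrinsicdevS2} does the work. Apply it to the $(0,2)$-tensor $\ctcn{\rho}{_{ab}}$: because $\ctcn{\omega}{_{ab}}=0$ and because the surviving correction terms carry factors $\ctcn{\rho}{_{sb}}\ct{m}{^s}=0=\ctcn{\rho}{_{as}}\ct{m}{^s}$ (orthogonality to $\ct{m}{^a}$), the full tangential projection of $\cds{_c}\ctcn{\rho}{_{ab}}$ is, in leaf components, $\cdcn{_C}\ctcn{\rho}{_{AB}}$. Antisymmetrising the derivative index with the first slot,
\begin{equation*}
\ctcn{P}{^d_a}\ctcn{P}{^e_b}\ctcn{P}{^f_c}\cds{_{[f}}\ctcn{\rho}{_{d]e}}=\ctcn{W}{_a^A}\ctcn{W}{_b^B}\ctcn{W}{_c^C}\,\cdcn{_{[C}}\ctcn{\rho}{_{A]B}}=0
\end{equation*}
by the single-cut curl-free property, which is exactly \cref{eq:rho-diff-eq-foliation}. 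For the clause ``in any conformal frame'' I would note that $\msg{_{ab}}=\omega^2\ms{_{ab}}$ restricts on each leaf to $\mcng{_{AB}}=\omega^2\mcn{_{AB}}$ (\cref{eq:gauge-mcn}), so the equation holding in every conformal frame of each cut --- part of the single-cut result --- carries over verbatim to $\scri$.

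\textbf{Remaining properties; the main obstacle.} The trace is immediate, $\ctcn{\rho}{^e_e}=\ctcn{P}{^{ae}}\ctcn{\rho}{_{ae}}=\mcn{^{AB}}\ctcn{\rho}{_{AB}}=a\cscn{K}$, the Gauss curvature of the leaf; the gauge rule \cref{eq:gauge-behaviour-appropriate-foliations} is the single-cut transformation read leaf-by-leaf, using that $\ctcn{\omega}{_A}=\cdcn{_A}\omega$ is the intrinsic gradient on $\Sc_v$; and $\ctcn{\rho}{_{ab}}=\tfrac{a}{2}\cscn{K}\ctcn{P}{_{ab}}$ on round-sphere leaves follows from the corresponding single-cut value. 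The genuinely substantive input is the single-cut existence and uniqueness on $\mathbb{S}^2$; if one wants it proved here rather than cited, the argument is elliptic: write $\ctcn{\rho}{_{AB}}=\tfrac{a}{2}\cscn{K}\mcn{_{AB}}+\sigma_{AB}$ with $\sigma_{AB}$ symmetric trace-free, substitute into $\cdcn{_{[A}}\ctcn{\rho}{_{B]C}}=0$, and use that on a sphere every symmetric trace-free $2$-tensor is built from second derivatives of two potentials --- no transverse-traceless remainder survives on $\mathbb{S}^2$ --- so the equation reduces to scalar elliptic equations sourced by $\cscn{K}$ that are uniquely solvable on $\mathbb{S}^2$. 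This is where the hypothesis ``$\mathbb{S}^2$ leaves'' is really used; I expect it to be the only non-routine step, everything else being the bookkeeping of \cref{eq:intrinsicdevS2} when $\ctcn{\omega}{_{ab}}=0$ and $\ctcn{\rho}{_{ab}}\ct{m}{^b}=0$.
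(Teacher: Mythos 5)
Your proposal is correct, but note that the paper does not actually prove this corollary: it is recalled verbatim from \cite{Fernandez-AlvarezSenovilla2022b} (it is their Corollary on the $\rho$-tensor for foliations, proved there by exactly the leaf-wise reduction you describe), so there is no in-paper argument to compare against. Your reconstruction matches the intended route: the two genuinely load-bearing steps are (i) the single-cut existence/uniqueness on $\mathbb{S}^2$, where your elliptic sketch is the right mechanism --- uniqueness because a symmetric trace-free curl-free tensor on $\mathbb{S}^2$ is (dual to) a TT tensor and $\mathbb{S}^2$ admits no nonzero ones, existence most cleanly by defining $\ct{\rho}{_{AB}}$ as $\tfrac{a}{2}\cscn{K}\mcn{_{AB}}$ in a round conformal frame and transporting it to any other frame by the transformation law \eqref{eq:gauge-behaviour-appropriate-foliations} --- and (ii) the passage from the leaf equation $\cdcn{_{[A}}\ctcn{\rho}{_{B]C}}=0$ to the three-dimensional equation \eqref{eq:rho-diff-eq-foliation}, which you correctly reduce via \cref{eq:intrinsicdevS2}: with $\ctcn{\omega}{_{ab}}=0$ (strict equipment, \cref{def:additional-structure1}) the correction terms all carry a factor $\ctcn{\rho}{_{sb}}\ct{m}{^s}=0$, so the tangential projection of $\cds{_c}\ctcn{\rho}{_{ab}}$ is purely intrinsic. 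The only point worth making explicit is smoothness of the family in $v$, which you flag; since on each leaf $\ctcn{\rho}{_{AB}}(v)$ is obtained algebraically from the round representative and the conformal factor of the uniformization, both depending smoothly on $\mcn{_{AB}}(v)$, this is unproblematic.
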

		Here, $ \cscn{K} $ is the family of Gaussian curvatures on $ \Scn $ ---see appendix A in \cite{Fernandez-AlvarezSenovilla2022b}.
			\begin{prop}[The first component of news on strictly equipped $\scri$ with $\mathbb{S}^2$ leaves]\label{thm:onepiece-news-foliations}
			Assume $\scri$ is strictly equipped	with $ v $ the parameter along the curves and such that \cref{eq:m-congruences-sec} holds. If the leaves have $ \mathbb{S}^2 $-topology, there is a one-parameter family of symmetric traceless gauge-invariant tensor fields
			\begin{equation}
			\ctcn{V}{_{AB}}\defeq \ctcn{U}{_{AB}}-\ctcn{\rho}{_{AB}}\spacef,
			\end{equation}
			that satisfies the gauge-invariant equation
			\begin{equation}\label{eq:diffUAB-foliations}
			\cdcn{_{[A}}\ctcn{U}{_{B]C}}=\cdcn{_{[A}}\ctcn{V}{_{B]C}}\spacef,
			\end{equation}
			where $ \ctcn{\rho}{_{AB}} $ is the family of tensor fields of \cref{thm:rho-tensor-foliation} (for $ a=1 $). Besides, $ \ctcn{V}{_{AB}} $ is unique with these properties.
		\end{prop}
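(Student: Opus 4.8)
The plan is to verify, for $\ctcn{V}{_{AB}}:=\ctcn{U}{_{AB}}-\ctcn{\rho}{_{AB}}$ with $\ctcn{\rho}{_{AB}}$ the family of \cref{thm:rho-tensor-foliation} taken at $a=1$, the four asserted properties ---symmetry, tracelessness, conformal gauge-invariance, and \cref{eq:diffUAB-foliations}--- and then to extract uniqueness from the uniqueness clause already contained in \cref{thm:rho-tensor-foliation}. Symmetry is immediate, since $\ctcn{S}{_{AB}}$, $\ctcn{\Sigma}{_{AB}}$ and $\ct{L}{_{AB}}$ are symmetric by construction in \cref{eq:UABdef-congruences,eq:LAB-congruences} and $\ctcn{\rho}{_{AB}}$ is symmetric by \cref{thm:rho-tensor-foliation}.

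For tracelessness I would contract \cref{eq:UABdef-congruences} with $\mcn{^{AB}}$: the shear term drops because $\ctcn{\Sigma}{_{AB}}$ is trace-free, the vorticity vanishes by strict equipment, and one is left with $\mcn{^{AB}}\ctcn{U}{_{AB}}=\mcn{^{AB}}\ctcn{S}{_{AB}}+\frac{1}{4}\cscn{\kappa}^2-\frac{1}{2}\ctcn{\Sigma}{}^2$. Rewriting $\mcn{^{AB}}\ctcn{S}{_{AB}}=\ctcn{P}{^{ab}}\cts{S}{_{ab}}$ in terms of the scalar and Ricci curvatures of $\prn{\scri,\ms{_{ab}}}$ and inserting the Gauss equation for the embedding of a leaf $\Sc_{v}$ into $\prn{\scri,\ms{_{ab}}}$, this combination collapses to the Gaussian curvature $\cscn{K}$; since $\mcn{^{AB}}\ctcn{\rho}{_{AB}}=\cscn{K}$ by \cref{thm:rho-tensor-foliation} at $a=1$, the tensor $\ctcn{V}{_{AB}}$ is trace-free. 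The conformal-invariance step is the one requiring real work: one collects the behaviour under \cref{eq:gauge-mcn} of $\ctcn{S}{_{AB}}$, of $\cscn{\kappa}$ and $\ctcn{\Sigma}{_{AB}}$ (hence of $\cscn{\kappa}\,\ctcn{\Sigma}{_{AB}}$), and of the scalar multiplying $\mcn{_{AB}}$ in $\ct{L}{_{AB}}$, assembles the non-tensorial part picked up by $\ctcn{U}{_{AB}}$ under rescaling, and checks that it matches term by term the inhomogeneous part of \cref{eq:gauge-behaviour-appropriate-foliations} evaluated at $a=1$, so that the two cancel in $\ctcn{V}{_{AB}}$.

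For \cref{eq:diffUAB-foliations} it is enough, because $\ctcn{V}{_{AB}}=\ctcn{U}{_{AB}}-\ctcn{\rho}{_{AB}}$, to prove $\cdcn{_{[A}}\ctcn{\rho}{_{B]C}}=0$, which is just the two-dimensional face of \cref{eq:rho-diff-eq-foliation}. Applying \cref{eq:intrinsicdevS2} to $\ctcn{\rho}{_{ab}}$, every correction term carries a factor $\ct{m}{^s}$ contracted with an index of $\ctcn{\rho}{}$ and hence vanishes by orthogonality; therefore the projection of $\cds{_{[f}}\ctcn{\rho}{_{d]e}}$ with the frame $\cbrkt{\ctcn{E}{^a_{A}}}$ reproduces exactly $\cdcn{_{[A}}\ctcn{\rho}{_{B]C}}$, and conversely the left-hand side of \cref{eq:rho-diff-eq-foliation}, being orthogonal to $\ct{m}{^a}$ in all its slots, is fixed by these components, so the three-dimensional equation of \cref{thm:rho-tensor-foliation} and $\cdcn{_{[A}}\ctcn{\rho}{_{B]C}}=0$ are equivalent. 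This equivalence also drives uniqueness: given another $\ctcn{V'}{_{AB}}$ enjoying all four properties, put $\ctcn{\rho'}{_{AB}}:=\ctcn{U}{_{AB}}-\ctcn{V'}{_{AB}}$, regarded as in \cref{thm:rho-tensor-foliation} as a tensor field on $\scri$ orthogonal to $\ct{m}{^a}$. Then $\ctcn{\rho'}{}$ is symmetric, orthogonal to $\ct{m}{^a}$, obeys $\cdcn{_{[A}}\ctcn{\rho'}{_{B]C}}=\cdcn{_{[A}}\ctcn{U}{_{B]C}}-\cdcn{_{[A}}\ctcn{V'}{_{B]C}}=0$ and hence \cref{eq:rho-diff-eq-foliation}, and ---combining the gauge-invariance of $\ctcn{V'}{_{AB}}$ with the transformation law of $\ctcn{U}{_{AB}}$ obtained in the previous step--- transforms exactly as in \cref{eq:gauge-behaviour-appropriate-foliations} with $a=1$. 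The uniqueness clause of \cref{thm:rho-tensor-foliation} (which uses the $\mathbb{S}^2$ hypothesis) then gives $\ctcn{\rho'}{_{AB}}=\ctcn{\rho}{_{AB}}$, whence $\ctcn{V'}{_{AB}}=\ctcn{U}{_{AB}}-\ctcn{\rho}{_{AB}}=\ctcn{V}{_{AB}}$.

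I expect the main obstacle to be the conformal-invariance bookkeeping: one must keep track of the non-tensorial pieces generated by the rescaling of the intrinsic Schouten tensor, of the expansion scalar and of the shear ---in particular the gradient and quadratic terms in $\cdcn{_{A}}\omega$ that appear in \cref{eq:gauge-behaviour-appropriate-foliations}--- and recognise their sum as precisely the inhomogeneous part of that formula at $a=1$. Everything else is structural, resting on \cref{thm:rho-tensor-foliation} and on the trace identity furnished by the Gauss equation.
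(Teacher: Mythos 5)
This proposition is not proved in the paper: it is recalled verbatim from \cite{Fernandez-AlvarezSenovilla2022b}, so there is no in-paper argument to compare against. Judged on its own terms, your outline is the natural (and, as far as the framework set up here allows one to tell, the intended) route: symmetry is trivial; the trace computation is correct, since contracting \cref{eq:UABdef-congruences} with $\mcn{^{AB}}$ kills the shear term, the vorticity vanishes by strict equipment, and the Gauss equation together with $\mcn{^{AB}}\ctcn{S}{_{AB}}=\frac{1}{2}\prn{R-2\ct{R}{_{ab}}\ct{m}{^a}\ct{m}{^b}}$ collapses $\mcn{^{AB}}\ctcn{U}{_{AB}}$ to $\cscn{K}$, matching $\ctcn{\rho}{^e_e}=\cscn{K}$ at $a=1$; the reduction of \cref{eq:diffUAB-foliations} to $\cdcn{_{[A}}\ctcn{\rho}{_{B]C}}=0$ via \cref{eq:intrinsicdevS2} is sound because every correction term carries a contraction of $\ct{m}{^s}$ with $\ctcn{\rho}{}$ and so vanishes; and deriving uniqueness of $\ctcn{V}{_{AB}}$ from the uniqueness clause of \cref{thm:rho-tensor-foliation} applied to $\ctcn{\rho'}{_{AB}}\defeq\ctcn{U}{_{AB}}-\ctcn{V'}{_{AB}}$ is exactly the right move and correctly isolates where the $\mathbb{S}^2$ hypothesis enters.

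The one place where your argument is a plan rather than a proof is the conformal-invariance step, and that is where essentially all the content of the proposition lives: you must actually verify that the inhomogeneous part of $\ctcng{U}{_{AB}}-\ctcn{U}{_{AB}}$ under \cref{eq:gauge-mcn} ---assembled from the Schouten transformation projected to $\Scn$, the rescalings $\ctcng{\Sigma}{_{AB}}=\omega\ctcn{\Sigma}{_{AB}}$ and $\csg{\kappa}=\omega^{-1}\prn{\cscn{\kappa}+2\ct{m}{^a}\cds{_a}\ln\omega}$-type terms, and the quadratic pieces in $\ct{L}{_{AB}}$--- reproduces term by term $-\frac{1}{\omega}\cdcn{_A}\ctcn{\omega}{_B}+\frac{2}{\omega^2}\ctcn{\omega}{_A}\ctcn{\omega}{_B}-\frac{1}{2\omega^2}\ctcn{\omega}{_C}\ctcn{\omega}{^C}\mcn{_{AB}}$. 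The cross terms between the extrinsic-curvature corrections to the projected Schouten tensor and the rescaling of $\cscn{\kappa}\ctcn{\Sigma}{_{AB}}$ and $\ct{L}{_{AB}}$ are precisely what the coefficients $\frac{1}{8},-\frac{1}{4},\frac{3}{4}$ in \cref{eq:LAB-congruences} are tuned to cancel, so this check cannot be waved through; note also that the same computation is needed again in your uniqueness step, where you invoke the transformation law of $\ctcn{U}{_{AB}}$ to show $\ctcn{\rho'}{_{AB}}$ obeys \cref{eq:gauge-behaviour-appropriate-foliations}. With that bookkeeping supplied, the proof is complete.
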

						\begin{remark}\label{rmk:other-topologies}
							In case the topology of the projected surface $ \Scn $ is not $ \mathbb{S}^2 $, one has  the generalisations of  \cref{thm:rho-tensor-foliation} and \cref{thm:onepiece-news-foliations} ---corollary 6.3 and 6.4 in \cite{Fernandez-AlvarezSenovilla2022b}, respectively. Standard topologies such as $ \mathbb{S}^2 $, $ \mathbb{R}\times\mathbb{S}^1 $ or $ \mathbb{R}^2 $ are permitted.
						\end{remark}		
	In addition, for algebraically special rescaled Weyl tensor at $ \scri $, and  under appropriate conditions, $ \ctcn{V}{_{AB}} $ determines the presence of gravitational radiation (i.e., $ \ctcn{V}{_{AB}}=0\iff\cts{\P}{^a}=0 $, and $ \ctcn{V}{_{AB}} $ is the `whole' news tensor, see Theorem 2 in that reference) ---as an example of this, the C-metric is discussed in \cref{ssec:special-case}. It is worth noting that the news tensor of $ \Lambda=0 $ conformal infinity enters in the expression of the (Bondi-Trautman) energy-momentum derived by Geroch \cite{Geroch1977} and angular momentum using the symplectic formalism \cite{Ashtekar1981c} ---see also \cite{Ashtekar2018}. Hence, any deeper understanding of this class of tensor fields is of interest if one aims at a mass-loss formula in the scenario with $ \Lambda>0 $.\\
		
		After this brief review, observe that if one knows the Levi-Civita connection $ \cds{_{a}} $ of $ \prn{\scri,\ms{_{ab}}} $, then  $ \ct{C}{_{ab}} $ can be computed. Thus, from the viewpoint of \cref{criterionGlobal}, $ \cds{_{a}} $ must contain one part of the radiative degrees of freedom. In fact, one can be tempted to say that it should be characterised by \textit{half of the dof}, noticing the symmetry between the role played by $ \ct{C}{_{ab}} $ and $ \ct{D}{_{ab}} $ in \cref{eq:commutator}. Next, consider the affine space $ \Xi $ of differential operators $ \cdt{_{a}} $ on $ \prn{\scri,\ms{_{ab}}} $. By \cref{thm:corollary-levi-civita}, on each neighbourhood $ \U_{p} $, the Levi-Civita connection belongs to this space, $ \cds{_{a}}\in\Xi $, and it is determined with respect to a fixed origin $ \cdto{_{a}}\in\Xi $ by two functions encoded in $ \cto{s}{^c_{ab}} $, 
			\begin{equation}
				\brkt{\cds{_{a}}-\cdto{_{a}}}\ct{v}{_{b}}=\cto{s}{^e_{ab}}\ct{v}{_{e}}\ .
			\end{equation}
		\emph{These represent one part of the asymptotic radiative degrees of freedom of the gravitational field in full general relativity with a positive cosmological constant}. Note that $ \cto{s}{^a_{bc}} $ corresponds to $ \ct{d\prn{\cds{},\cdto{}}}{^a_{bc}} $ and, complementarily, $ \cds{_{a}} $ can be given coordinates $ \prn{0,0} $ in $ \Xi $ ---see \cref{fig:xi-space}. Remarkably, there is a broad class of space-times for which these two degrees of freedom suffice to determine the content of gravitational radiation at infinity; those having an algebraically special $ \ct{d}{_{\alpha\beta\gamma}^\delta} $ at $ \scri $ ---see \cref{ssec:special-case}.	Noting that $ \ct{t}{^a_{bc}}=0 $ imposes that each element of $ \ct{e}{^a_{i}}\in \Upsilon_{s} $ is surface-orthogonal (i.e., defines a foliation), \cref{thm:onepiece-news-foliations} is applicable to each of the elements $ \cbrkt{\ct{e}{^a_{i}}}=\cbrkt{\ct{m}{^a},\cta{m}{^a},\cth{m}{^a}} $ of the triad, leading to the following result
				\begin{lemma}\label{thm:news-triads}
				 	Let $ \cbrkt{\ct{m}{^a},\cta{m}{^a},\cth{m}{^a}}  $ be the elements of a triad $ \cbrkt{\ct{e}{^a_{i}}}\in \Upsilon_{s} $. Then, provided that the corresponding projected surfaces $ \Scn $, $  \acute{\Scn} $ and $ \hat{\Scn} $ associated with each of the elements satisfy the topology restriction of \cref{thm:onepiece-news-foliations} ---or its generalisation to other topologies indicated in \cref{rmk:other-topologies}---, there exists a set of three first components of news, $ \ctcn{V}{_{AB}} $, $ \ctcnh{V}{_{\Ah\Bh}} $, $ \ctcna{V}{_{\Aa\Ba}} $, corresponding to each of the elements of the triad.
				\end{lemma}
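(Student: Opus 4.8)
The plan is to reduce the statement to three essentially independent applications of \cref{thm:onepiece-news-foliations}, one for each element of the triad; the only step that is not pure bookkeeping is the passage from $ \cbrkt{\ct{e}{^a_{i}}}\in\Upsilon_{s} $ to the surface-orthogonality of all three vector fields. First I would unpack the hypothesis: by definition $ \cbrkt{\ct{e}{^a_{i}}}\in\Upsilon_{s} $ means $ \ct{t}{^a_{bc}}=0 $, and from the explicit expression established in \cref{thm:operator-D} one has $ \ct{t}{^a_{bc}}=2\ct{m}{^a}\ctcn{\omega}{_{bc}}+2\cta{m}{^a}\ctcna{\omega}{_{bc}}+2\cth{m}{^a}\ctcnh{\omega}{_{bc}} $. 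Contracting this vanishing tensor successively with $ \ct{m}{_{a}} $, $ \cta{m}{_{a}} $ and $ \cth{m}{_{a}} $, and using the orthonormality of the triad so that exactly one term survives each time, I obtain $ \ctcn{\omega}{_{ab}}=\ctcna{\omega}{_{ab}}=\ctcnh{\omega}{_{ab}}=0 $ (this could also be deduced from \cref{thm:geom-id}, as already noted immediately before the lemma).

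Next, each of $ \ct{m}{^a} $, $ \cta{m}{^a} $, $ \cth{m}{^a} $ is now a unit, vorticity-free vector field on $ \scri $; by the Frobenius theorem it is therefore surface-orthogonal and defines a foliation by cuts, and, as recalled around \cref{eq:m-congruences-sec}, it can be written in gradient form $ \ct{m}{_a}=F\cds{_a}v $. Hence $ \scri $ endowed with any one of the three congruences is \emph{strictly equipped} in the sense of \cref{def:additional-structure1}, the associated projected surface $ \Scn $, $ \acute{\Scn} $ or $ \hat{\Scn} $ playing the role of the space of leaves.

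Then I would invoke \cref{thm:onepiece-news-foliations} three times. For $ \ct{m}{^a} $, the assumption that $ \Scn $ has $ \mathbb{S}^2 $-topology ---or, through \cref{rmk:other-topologies}, one of the other admitted topologies--- together with strict equipment and the validity of \cref{eq:m-congruences-sec} delivers the unique symmetric, traceless, gauge-invariant one-parameter family $ \ctcn{V}{_{AB}}=\ctcn{U}{_{AB}}-\ctcn{\rho}{_{AB}} $ satisfying \cref{eq:diffUAB-foliations}. Running the identical argument for $ \cth{m}{^a} $ and for $ \cta{m}{^a} $, carrying the hat and acute decorations on the intrinsic two-dimensional quantities of $ \hat{\Scn} $ and $ \acute{\Scn} $, produces $ \ctcnh{V}{_{\Ah\Bh}} $ and $ \ctcna{V}{_{\Aa\Ba}} $. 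Collecting the three yields the asserted set of first components of news.

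The main obstacle ---which is minor--- is simply checking that all three projected surfaces independently meet the hypotheses of \cref{thm:onepiece-news-foliations}; this is exactly why the topology restriction is built into the statement and why \cref{rmk:other-topologies} is needed, to accommodate the cases in which some $ \Scn $ is not a sphere. Everything downstream of the vanishing of the vorticities is a verbatim, element-by-element transcription of results already proved.
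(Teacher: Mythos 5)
Your argument is correct and coincides with the paper's own justification: the paper likewise observes that $ \ct{t}{^a_{bc}}=0 $ forces each element of the triad to be surface-orthogonal (equivalently, $ \ctcn{\omega}{_{ab}}=\ctcna{\omega}{_{ab}}=\ctcnh{\omega}{_{ab}}=0 $, which your contraction argument makes explicit), so that $ \scri $ is strictly equipped with respect to each of $ \ct{m}{^a} $, $ \cta{m}{^a} $, $ \cth{m}{^a} $, and then applies \cref{thm:onepiece-news-foliations} ---or its topological generalisation--- three times. Your write-up merely fills in the routine details (Frobenius, the gradient form \cref{eq:m-congruences-sec}, and the element-by-element bookkeeping), so no substantive difference or gap is present.
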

				\begin{remark}
				   Recalling \cref{rmk:sset-notempty} and using \cref{thm:news-triads}, it follows that locally, and provided that the topology requirements of \cref{rmk:other-topologies} are met, it is always possible to find three first components of news.
				\end{remark}
			One further justification of $ \Xi $ containing part of the radiative degrees of freedom comes from \cref{eq:nabla-D-D}. Given a foliation by umbilical cuts ---$ \ctcn{\Sigma}{_{AB}}=0 $--- defined by $ \ct{m}{^a} $, and the set of all possible triads containing this vector field, there is an associated set of operators $ \cbrkt{\cdt{_{a}}} $ that induce the covariant derivative $ \cdcn{_{A}} $ in $ \Scn $. And this determines the associated $ \ctcn{V}{_{AB}} $. A different umbilical foliation, given by $ \ct{m'}{^a} $, has a different $ \ctcn{V'}{_{AB}} $, and its connection $ \cdcn{_{A}}' $ is determined now by a different set of operators $ \cbrkt{\cdtp{_{a}}} $. From the point of view of $ \Xi $, any element of $ \cbrkt{\cdt{_{a}}} $ is determined  from any element of $ \cbrkt{\cdtp{_{a}}} $ by two functions. 
		\subsection{The algebraically special case}\label{ssec:special-case}
			The case with $ \ct{d}{_{\alpha\beta\gamma}^\delta} $ algebraically special at $ \scri $ ---i.e., when the rescaled Weyl tensor has at least one repeated principal null direction at $ \scri $--- is worthy of attention. A result\footnote{The wording has been adapted to suit better the present work.}  of \cite{Fernandez-AlvarezSenovilla2022b} (Lemma 6.8)  is applicable, and allows to translate this condition to an equation involving just $ \prn{\scri,\ms{_{ab}},\ct{D}{_{ab}}} $ and a vector field $ \ct{m}{^{a}} $,
				\begin{lemma}\label{thm:nircCD}
					Assume that $ \ct{d}{_{\alpha\beta\gamma}^\delta} $ is algebraically special at $ \scri $ and denote by $ \ct{m}{^{a}} $ the unit vector field aligned with the projection to $ \scri $ of (any of the) repeated principal null directions. Then
						\begin{equation}\label{eq:nirc}
							\ct{D}{_{ab}}-\frac{1}{2}\ct{D}{_{ef}}\ct{m}{^e} \ct{m}{^f} \prn{3\ct{m}{_a} \ct{m}{_b} - \ms{_{ab}} }= \ct{m}{^d}\ct{\epsilon}{_{ed(a}} \prn{\ct{C}{_{b)}^e} +\ct{m}{_{b)} }\ct{m}{^f} \ct{C}{_f^e}}\spacef.
						\end{equation}
				\end{lemma}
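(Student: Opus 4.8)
The plan is to reduce the algebraic-speciality hypothesis to a pointwise condition on a single null direction and then transport it through the electric--magnetic decomposition of $d_{\alpha\beta\gamma}{}^{\delta}$ at $ \scri $. Recall that a null vector $k^{\alpha}$ spans a repeated principal null direction of a Weyl-candidate tensor precisely when the Bel criterion for type II (or more special) holds, $d_{\mu\rho\sigma[\alpha}k_{\beta]}k^{\rho}k^{\sigma}=0$; equivalently, fixing a Newman--Penrose null tetrad $(\ell,\mathbf{n},\mathbf{m},\bar{\mathbf{m}})$ with $\ell^{\alpha}\propto k^{\alpha}$, algebraic speciality reads $\Psi_{0}=\Psi_{1}=0$. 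At a point of the space-like $ \scri $ the normal $n^{\alpha}$ is unit timelike, so any null $k^{\alpha}$ there splits as $k^{\alpha}\propto n^{\alpha}+m^{\alpha}$, where $m^{\alpha}=e^{\alpha}_{a}m^{a}$ is unit, tangent to $ \scri $, and orthogonal to $n^{\alpha}$ ---this is the $m^{a}$ of the statement, since the projection of $k^{\alpha}$ to $ \scri $ is proportional to $m^{\alpha}$. One completes the tetrad by taking $\mathbf{n}^{\alpha}\propto n^{\alpha}-m^{\alpha}$ and $\mathbf{m},\bar{\mathbf{m}}$ to span the two-plane $\Pi\subset T\scri$ orthogonal to $m^{a}$.

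Next I would express $\Psi_{0}$ and $\Psi_{1}$ of $d_{\alpha\beta\gamma}{}^{\delta}$ in terms of $D_{ab}$ and $C_{ab}$. Because $ \scri $ is space-like and $d_{\alpha\beta\gamma}{}^{\delta}$ is a Weyl candidate, at $ \scri $ it is completely fixed by its electric and magnetic parts with respect to $n^{\alpha}$ ---$D_{ab}$ of \cref{eq:electric} and $C_{ab}$, the latter moreover being the Cotton--York tensor up to the constant in \cref{eq:cotton-york}, though that relation is not needed here. Substituting the standard reconstruction of $d_{\alpha\beta\gamma\delta}$ from $(D_{ab},C_{ab})$ into the defining contractions (with the usual sign conventions) $\Psi_{0}\propto d_{\alpha\beta\gamma\delta}\ell^{\alpha}\mathbf{m}^{\beta}\ell^{\gamma}\mathbf{m}^{\delta}$ and $\Psi_{1}\propto d_{\alpha\beta\gamma\delta}\ell^{\alpha}\mathbf{n}^{\beta}\ell^{\gamma}\mathbf{m}^{\delta}$, and using $\ell\propto n+m$ together with $\mathbf{m}\in\Pi$, one finds that $\Psi_{0}$ and $\Psi_{1}$ are (up to numerical factors absorbed in the tetrad normalisation) the spin-weight-$2$ and spin-weight-$1$ pieces of a single complex object assembled from $D_{ab}$ and the dual of $C_{ab}$ taken along $m^{a}$, namely ${}^{\star}C_{ab}:=m^{d}\epsilon_{ed(a}C_{b)}{}^{e}$, which is the Hodge dual on $\Pi$.

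Then I would impose $\Psi_{0}=\Psi_{1}=0$ and reassemble. Using the $m^{a}$-decomposition $D_{ab}=\tfrac12(D_{ef}m^{e}m^{f})(3m_{a}m_{b}-h_{ab})+2m_{(a}D^{\perp}_{b)}+D^{\mathrm{TT}}_{ab}$ ---with $D^{\perp}_{a}$ the $m$-orthogonal vector part and $D^{\mathrm{TT}}_{ab}$ the part fully orthogonal to $m^{a}$ and trace-free, and likewise for $C_{ab}$---, the vanishing of the weight-$2$ piece equates the TT-sectors, $D^{\mathrm{TT}}_{ab}={}^{\star}C^{\mathrm{TT}}_{ab}$, and the vanishing of the weight-$1$ piece equates the vector sectors, $D^{\perp}_{a}={}^{\star}C^{\perp}_{a}$. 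No $mm$-scalar and no trace condition appears, because $\Psi_{0},\Psi_{1}$ carry only spin weights $\pm2,\pm1$; this is exactly why the scalar term $\tfrac12(D_{ef}m^{e}m^{f})(3m_{a}m_{b}-h_{ab})$ is subtracted on the left of \cref{eq:nirc}. Packaging the two component identities into one covariant equation on $ \scri $, and checking that $m^{d}\epsilon_{ed(a}\left(C_{b)}{}^{e}+m_{b)}m^{f}C_{f}{}^{e}\right)$ expands to $2m_{(a}\,{}^{\star}C^{\perp}_{b)}+{}^{\star}C^{\mathrm{TT}}_{ab}$, delivers \cref{eq:nirc}. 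As a sanity check one can verify that \cref{eq:nirc} is consistent with \cref{eq:commutator} in this algebraically special regime, and that it reproduces the C-metric behaviour discussed in \cref{ssec:special-case}.

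The main obstacle is the bookkeeping in the second step: writing the reconstruction of $d_{\alpha\beta\gamma\delta}$ at $ \scri $ out of $(D_{ab},C_{ab})$ with the correct signs, and then verifying that the pair of complex scalar conditions $\Psi_{0}=\Psi_{1}=0$ assembles into exactly the tensorial equation stated, with the volume form of $ \scri $ contracted on $m^{d}$ playing the role of the two-dimensional Hodge dual on $\Pi$. A secondary point to watch is independence of the choice of repeated principal null direction when $d_{\alpha\beta\gamma}{}^{\delta}$ is more degenerate than type II (types D, III, N): \cref{eq:nirc} must hold for each such direction, and it does because every one of them satisfies $\Psi_{0}=\Psi_{1}=0$ in its own adapted tetrad. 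An alternative is to stay fully tensorial and expand the Bel criterion $d_{\mu\rho\sigma[\alpha}k_{\beta]}k^{\rho}k^{\sigma}=0$ directly; the obstacle then shifts to isolating its single independent symmetric trace-free component on $ \scri $ among the many algebraically dependent ones, after which the same $m^{a}$-decomposition again yields \cref{eq:nirc}. This line of reasoning also matches Lemma 6.8 of \cite{Fernandez-AlvarezSenovilla2022b}.
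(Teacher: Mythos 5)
Your proposal is essentially correct, and it is worth noting that the paper itself does not prove this lemma: it imports it verbatim as Lemma~6.8 of \cite{Fernandez-AlvarezSenovilla2022b}, so there is no in-paper argument to compare against. Your reconstruction captures the right mechanism. A repeated principal null direction at a point of the space-like $\scri$ necessarily has the form $k^{\alpha}\propto n^{\alpha}+m^{\alpha}$ with $m^{a}$ unit and tangent, the speciality condition is $\Psi_{0}=\Psi_{1}=0$ in the adapted tetrad, and since the five complex scalars of $\ct{d}{_{\alpha\beta\gamma}^{\delta}}$ are precisely the spin-weighted components of $D_{ab}+\mathrm{i}\,C_{ab}$ relative to $m^{a}$ (scalar sector $\leftrightarrow\Psi_{2}$, vector sector $\leftrightarrow\Psi_{1},\Psi_{3}$, TT sector $\leftrightarrow\Psi_{0},\Psi_{4}$), the two complex conditions give exactly four real equations identifying the vector and TT sectors of $D_{ab}$ with the $m$-duals of those of $C_{ab}$, while leaving $m^{a}m^{b}D_{ab}$ unconstrained --- which is why that scalar piece is subtracted on the left of \eqref{eq:nirc}. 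The counting is right and the covariant repackaging of the right-hand side as $m^{d}\epsilon_{ed(a}\bigl(C_{b)}{}^{e}+m_{b)}m^{f}C_{f}{}^{e}\bigr)$ is consistent (note it is odd under $m^{a}\to-m^{a}$, so the stated alignment of $m^{a}$ with the projected repeated direction is what fixes the sign). The cited source arrives at the same identity through the radiant-superenergy formalism, i.e.\ lightlike projections of the rescaled Weyl tensor along $n^{\alpha}\pm m^{\alpha}$ rather than Newman--Penrose scalars, but this is the same decomposition in different clothing. What remains in your write-up is only the convention-dependent bookkeeping (normalisation of the tetrad, sign of $\epsilon_{abc}$, sign in the definition of $C_{ab}$), which you flag honestly; there is no conceptual gap.
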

			Thus, in this case, the TT-tensor $ \ct{D}{_{ab}} $ is determined by $ \ct{C}{_{ab}} $ except for the component $ \ct{m}{^{a}}\ct{m}{^b}\ct{D}{_{ab}} $. Nevertheless, the latter is not involved in the gravitational condition, as one has that \cref{eq:nirc} implies 
				\begin{equation}\label{eq:rad-cond-special-case}
					\cts{\P}{^a}=0\iff  \ct{D}{_{ab}}=\frac{1}{2}\ct{D}{_{ef}}\ct{m}{^e} \ct{m}{^f} \prn{3\ct{m}{_a} \ct{m}{_b} - \ms{_{ab}} } \iff \ct{C}{_{ab}}=\frac{1}{2}\ct{C}{_{ef}}\ct{m}{^e} \ct{m}{^f} \prn{3\ct{m}{_a} \ct{m}{_b} - \ms{_{ab}} }\ .
				\end{equation} 
			This last statement follows from Corollary 6.7 in \cite{Fernandez-AlvarezSenovilla2022b}, where the radiant superenergy formalism was used\footnote{With equations (2.50, 2.52) and property (c) on page 16 in that work, one recasts the result as \eqref{eq:rad-cond-special-case}}. Notice that, due to the fact that $ \ct{D}{_{ab}} $ and $ \ct{C}{_{ab}} $ are traceless, one has
				\begin{equation}
					\ct{P}{^{ab}}\ct{D}{_{ab}}+\ct{m}{^a}\ct{m}{^b}\ct{D}{_{ab}}=0\ ,\quad \ct{P}{^{ab}}\ct{C}{_{ab}}+\ct{m}{^a}\ct{m}{^b}\ct{C}{_{ab}}=0\ ,
				\end{equation}
			so that, indeed, no $ \ct{m}{^a}\ct{m}{^b}\ct{D}{_{ab}} $ ($ \ct{m}{^a}\ct{m}{^b}\ct{C}{_{ab}} $) component is involved in \cref{eq:rad-cond-special-case}\ . \emph{Hence, in this particular scenario, the presence of radiation is fully ruled out by the Cotton-York tensor of $ \prn{\scri,\ms{_{ab}}} $, that is, by the two degrees of freedom\footnote{In \cite{Ashtekar2014} it is stated that ``The condition $ \ct{C}{_{ab}}=0 $ removes `half the	radiative degrees of freedom' in the gravitational field and, in addition, the gravitational waves it does allow can not carry any of the de Sitter momenta across $ \scri $''. From the perspective of  \cref{criterionGlobal} , $ \ct{C}{_{ab}}=0 $ not only cuts the degrees of freedom by half, but it kills gravitational radiation ---which somehow solves the puzzle of having waves that `can not carry any of the de Sitter momenta across $ \scri $'. Instead, what it is stated in this work is that half of the radiative degrees of freedom are removed by condition \eqref{eq:nirc} in a very different way, as neither $ \ct{C}{_{ab}} $ nor $ \ct{D}{_{ab}} $ vanish in general, so that the presence of gravitational waves at infinity is completely feasible ---and, indeed, that is the case except for $ \ct{d}{_{\alpha\beta\gamma}^\delta} $ having two repeated principal null directions (type D) at $ \scri $ \emph{co-planar} with the normal $ \ct{N}{^\alpha} $ \cite{Fernandez-AlvarezSenovilla2022b}.} of $ \Xi $}. Notably, this situation is in correspondence with what happens in the scenario with $ \Lambda=0 $, where the phase space of connections introduced by Ashtekar \cite{Ashtekar81} at $ \scri $  has precisely 2 degrees of freedom, as the transport equations for the relevant curvature fields along the generators are sourced by the fields themselves\footnote{ This last point also suggests that the missing pair of dof in the general case may appear at $ \scri $ by some kind of time-derivative of the intrinsic fields of a previous space-like hypersurface.}. Working out further details, one can introduce a notion of incoming radiation, and the analogy with $ \Lambda=0 $ goes deeper, as in that case the `electric' part of the curvature is determined by the `magnetic' part except for the Coulomb component --see remark 6.7 in \cite{Fernandez-AlvarezSenovilla2022b} for more details. Hence, this situation may be the closest one --from the point of view of gravitational radiation-- to the asymptotically flat case. The analogy does not come at any price, as condition \eqref{eq:nirc} constrains the radiative degrees of freedom. Thus, it should not be surprising that adapting non-geometric methods of the $ \Lambda=0 $ scenario ---such as coordinate systems à la Bondi or boundary conditions on metric coefficients \cite{Bonga2023, Saw2016, Compere2019}--- may give answers that differ from the general ones obtained in the present approach.\\
		
			Some relevant examples of exact solutions for which $ \ct{d}{_{\alpha\beta\gamma}^\delta} $ is algebraically special at $ \scri $ include the Robinson-Trautman metrics and the C-metric ---both radiative--- and the Kerr-de Sitter like solutions\footnote{The Kerr-de Sitter like solutions are characterised by having a KVF whose associated rescaled Mars-Simon tensor vanishes. In general, they have non-vanishing $ \ct{D}{_{ab}} $ and $ \ct{C}{_{ab}} $ at $ \scri $, and their form is that of \cref{eq:rad-cond-special-case} with $ \ct{m}{^a} $ being the conformal KVF induced by the space-time KVF, therefore they do not have gravitational radiation at $ \scri $. There is a generalisation called \emph{asymptotically} Kerr-de Sitter like, in which the rescaled Mars-Simon tensor is required to vanish at $ \scri $ but not necessarily everywhere; this more broad class can contain gravitational radiation at infinity.} \cite{Mars2016} ---non radiative---, whose content on gravitational radiation was analysed in \cite{Fernandez-AlvarezSenovilla2022b} ---see also \cite{Senovilla2022}. In what follows the Kerr-de Sitter metric and C-metric with $ \Lambda>0 $ are used to illustrate the relation between shears of different triads, serving as examples for \cref{thm:geom-id}, \cref{thm:corollary-levi-civita,thm:news-triads}, and also to give cases of the surjection $ \varphi $ of \cref{rmk:onto}. The basic radiative properties are concisely reviewed and conventions are followed according to section 8 of \cite{Fernandez-AlvarezSenovilla2022b}.
			 \subsubsection*{Kerr-de Sitter}
			 	Kerr-de Sitter is an example of a space-time with conformally flat metric $ \ms{_{ab}} $ at $ \scri $, hence the Cotton-York vanishes and condition \eqref{eq:commutator} is satisfied, meaning that there is no gravitational radiation arriving at infinity. The metric at $ \scri $ reads 
	 			\begin{equation}
	 					h=N^2\df{t}^2+\frac{\sin^2\theta}{1+N^2a^2} \df{\phi}^2-\frac{N^2a\sin^2\theta}{1+N^2a^2}\prn{\df{\phi} \df{t} + \df{t}\df{\phi}}+\frac{1}{1+N^2a^2\cos^2\theta}\df{\theta}^2\quad\ ,
					\end{equation}
				where $ N^2=\Lambda/3 $. Now, consider the following triad $ \cbrkt{\ct{e}{^a_{i}}} $,
					\begin{align}
						\ct{m}{^{a}}&=\frac{1+N^2a^2}{N\prn{1+N^2a^2\cos^2\theta}^{1/2}}\prn{\delta^a_{t}+aN^2\delta^a_\phi}\ ,\\
						\cta{m}{^a}&=\frac{\prn{1+a^2N^2}^{1/2}}{\sin\theta}\delta^a_{\phi}\ ,\\
						\cth{m}{^a}&=\prn{1+a^2N^2\cos^2\theta}^{1/2}\delta^a_{\theta}\ .
					\end{align}
				It has $ \ctcn{\Sigma}{_{\ma\mh}}=\ctcna{\Sigma}{_{\mh m}}=\ctcnh{\Sigma}{_{\ma m}}=0 $, which trivially fulfils \cref{eq:identity-shears} and shows that $ \ct{s}{^a_{bc}}=0 $. Then, according to \cref{thm:operator-D}, the associated $ \cdt{_{a}} $ coincides with the Levi-Civita $ \cds{_{a}} $ connection. Not only that, but one can compute the vorticities to get $ \ctcn{\omega}{_{\ma\mh}}=\ctcna{\omega}{_{\mh m}}=\ctcnh{\omega}{_{\ma m}}=0  $, which is in agreement with \cref{thm:geom-id} and implies that $ \ct{t}{^a_{bc}}=0 $. Thus, $ \cbrkt{\ct{e}{^a_{i}}}\in \Upsilon_{s} $ and, according to \cref{thm:corollary-levi-civita}, not only $ \cdt{_{a}} $ coincides with $ \cds{_{a}} $, but its action on arbitrary tensor fields is completely determined by the triplet of two-dimensional connections associated to this triad:
					\begin{align}
						\cds{_{a}}\ct{v}{_{b}}=\cdt{_{a}}\ct{v}{_{b}}			&=\ctcn{W}{_{a}^A}\ctcn{W}{_{b}^B}\cdcn{_{A}}\ctcn{v}{_{B}}+\cta{m}{_{a}}\cta{m}{_{b}}\cta{m}{^\Ah}\cta{m}{^\Bh}\cdcnh{_{\Ah}}\prn{\ct{m}{^e}\ct{v}{_{e}}\ct{m}{_{\Bh}}}+\cth{m}{_{a}}\cth{m}{_{b}}\cth{m}{^\Aa}\cth{m}{^\Ba}\cdcna{_{\Aa}}\prn{\ct{m}{^e}\ct{v}{_{e}}\ct{m}{_{\Ba}}}\nonumber\\
						&+ \ct{m}{_{a}}\cta{m}{_{b}}\ct{m}{^\Ah}\cta{m}{^\Bh}\cdcnh{_{\Ah}}\ctcn{v}{_{\Bh}}+\ct{m}{_{b}}\cta{m}{_{a}}\ct{m}{^\Bh}\cta{m}{^\Ah}\cdcnh{_{\Ah}}\ctcn{v}{_{\Bh}}+\ct{m}{_{a}}\cth{m}{_{b}}\ct{m}{^\Aa}\cth{m}{^\Ba}\cdcna{_{\Aa}}\ctcn{v}{_{\Ba}}+\ct{m}{_{b}}\cth{m}{_{a}}\ct{m}{^\Ba}\cth{m}{^\Aa}\cdcna{_{\Aa}}\ctcn{v}{_{\Ba}}\nonumber\\
						+&\ct{m}{_{a}}\ct{m}{_{b}}\brkt{\ct{m}{^\Aa}\ct{m}{^\Ba}\cdcna{_{\Aa}}\ctcn{v}{_{\Ba}}+\ct{m}{^\Ah}\ct{m}{^\Bh}\cdcnh{_{\Ah}}\prn{\cta{m}{_{\Bh}}\cta{m}{^e}\ct{v}{_{e}}}}\ .
					\end{align}  
				In passing by, notice that $ \ct{t}{^a_{bc}}=0 $ implies that one can find scalar functions $ u $, $ v $, $ w $, $ A $, $ B $, $ C $, depending on the coordinates $ \cbrkt{t,\phi,\theta} $, such that
					\begin{equation}
					\ct{m}{_{a}}=A\df{u}\ ,\quad \cta{m}{_{a}}=B\df{v}\ ,\quad\cth{m}{_{a}}=C\df{w}\ .
					\end{equation}
				Indeed, one can easily identify
					\begin{equation}
					u=t\ ,	\quad v=\phi-aN^2t\ ,\quad w=\theta
					\end{equation}
				and 
					\begin{equation}
					A=\frac{N\prn{1+a^2N^2\cos^2\theta}^{1/2}}{1+a^2N^2}\ ,\quad B=\frac{\sin\theta}{\prn{1+a^2N^2}^{1/2}}\ ,\quad C=\frac{1}{\prn{1+N^2a^2\cos^2\theta}^{1/2}}\ .
					\end{equation}
				Then,  $ \cbrkt{u,v,w} $ diagonalises the metric
					\begin{equation}
						h=\frac{N^2\prn{1+a^2N^2\cos^2w}}{\prn{1+a^2N^2}^2}\df{u}^2+\frac{\sin^2w}{1+a^2N^2}\df{v}^2+\frac{1}{1+N^2a^2\cos^2w}\df{w}^2\ ,
					\end{equation}
				which also illustrates \cref{rmk:directions}. In addition to this, it is possible to compute \cite{Fernandez-AlvarezSenovilla2022b} the first component of news $ \ctcn{V}{_{AB}} $ associated to this $ \ct{m}{^a} $ and check that
					\begin{equation}
						\ctcn{V}{_{AB}}=0\ .
					\end{equation}
				
				As a different choice of triad, make a rotation around $ \cth{m}{^a} $ to get a new couple of elements,
					\begin{align}
						\ct{m}{^a}&=\frac{1}{N}\delta^a_{t}\ ,\\
						\cta{m}{^a}&=\frac{1}{\prn{1+a^2N^2\cos^2\theta}^{1/2}}\brkt{a\sin\theta\delta^a_{t}+\prn{1+a^2N^2}\delta^a_{\phi}}\ .
					\end{align}
				Observe that $ \ct{m}{^a} $ is a Killing vector field, thus one has $ \ctcn{\Sigma}{_{\ma\mh}}=0  $. For the other two shears one finds
					\begin{equation}
						\ctcna{\Sigma}{_{\mh m}}=-\ctcnh{\Sigma}{_{\ma m}}=aN\cos\theta\ ,
					\end{equation}
				which again satisfies \cref{eq:identity-shears}. That is, in this case $ \ct{s}{^a_{bc}}\neq 0 $ and the operator $ \cdt{_{a}} $ is different from the previous one, so that it does not coincide with $ \cds{_{a}} $. This $ \cdt{_{a}} $ can be labelled with the values of any of the three shears' components, e.g. $ \prn{\ctcn{\Sigma}{_{\ma\mh}},	\ctcna{\Sigma}{_{\mh m}}}=\prn{0,aN\cos\theta} $. 
			 \subsubsection*{C-metric with positive cosmological constant}
			 	The C-metric (here assuming $ \Lambda>0 $) represents two black-holes accelerating \cite{Podolsky2000}. Hence, it is not surprising that it contains gravitational radiation at $ \scri $ \cite{Fernandez-AlvarezSenovilla2020b}, whose metric in the gauge choices of \cite{Fernandez-AlvarezSenovilla2022b} reads 
			 	---see also \cite{Podolsky2023} for a recent treatment of the generalised black holes metrics of Petrov type D which include the C-metric:
			 		\begin{equation}
					h= (a^2S+N^2) \df{\tau^2}+\frac{N^2}{S(a^2S+N^2)}\df{p^2}+S\df{\sigma^2}\quad.
					\end{equation}
				Here $ 2ma< 1 $, $ S(p) \defeq (1-p^2)(1-2a mp)\geq 0 $ with $ p\in(-1,1] $, $ \sigma\in\Big[0,2\pi /(1-2am)\Big) $ and $ N^2=\Lambda/3 $ as before. The rescaled Weyl tensor is algebraically special at $ \scri $ but the metric is \emph{not} conformally flat. Thus \cref{eq:nirc} holds and all the information about gravitational radiation can be encoded in the Cotton-York tensor $ \ct{Y}{_{ab}} $ --see \cref{eq:cotton-york}--, and having a non-vanishing asymptotic super-Poynting vector field,
				\begin{equation}\label{eq:canonical-super-P-C-metric}
				\cts{\Pc}{^a}\eqs \sqrt{\frac{3}{\Lambda}}18a m^2 S  \prn{1+\frac{6}{\Lambda}Sa^2}\delta_{p}^a\ .
				\end{equation}
				 It was shown in \cite{Fernandez-AlvarezSenovilla2022b} that, associated to the foliation given by 
					 \begin{equation}\label{eq:m-cmetric}
	 					\ct{m}{^a}=\prn{-\frac{N}{\csS{T}}\delta_\tau^a-\frac{a S}{N}\delta_p^a}
	 				\end{equation}
				 the first component of news $ \ct{V}{_{ab}} $ reads
		 	 \begin{equation}
				 \ctcn{V}{_{AB}}= \frac{H}{S^2}\cdcn{_A}p\cdcn{_B}p - H \cdcn{_A}\sigma\cdcn{_B}\sigma\ ,
				 \end{equation}	 
			where on each leaf $ \Sc $
			 \begin{equation}
					 H\defeqc \int 3a m S dp\eqc 3a m\prn{\frac{1}{2}a mp^4-\frac{1}{3}p^3-a m p^2+p}+ \frac{3}{2}m^2a^2- 2a m\ .
			 \end{equation}
			 Indeed, not only $ \ct{m}{^a} $, $ \ct{C}{_{ab}} $ and $ \ct{D}{_{ab}} $ obey \cref{eq:nirc}, but also
			 	\begin{equation}
			 		\ctcn{V}{_{AB}}=0 \iff \cts{\P}{^a}=0\ \iff \text{ no gravitational radiation ($ a=0 $)}\ .
			 	\end{equation}
			 One can complete a triad by choosing, for instance, 
			 	\begin{align}
			 		\cta{m}{^a}&=\frac{aS^{1/2}}{Sa^2+N^2}\delta^a_{\tau}+S^{1/2}\delta^a_{p}\ ,\\
			 		\cth{m}{^a}&=\frac{1}{S^{1/2}}\delta^a_{\sigma}\ .
			 	\end{align}
			 The explicit computation of the shears gives $ \ct{s}{^a_{bc}}=0 $. The vorticity of $ \ct{m}{_{a}} $ was computed in \cite{Fernandez-AlvarezSenovilla2022b}, and it vanishes. Thus, using these results and \cref{thm:geom-id} one has also that $ \ct{t}{^a_{bc}}=0 $. That is, $ \cbrkt{\ct{e}{^a_{i}}}\in \Upsilon_{s}  $, so that $ \cdt{_{a}}=\cds{_{a}} $ and this connection is determined by the triplet of two-dimensional derivative operators $ \cbrkt{\cdcn{_{A}},\cdcnh{_{\Ah}},\cdcna{_{\Aa}}} $. In addition, it serves as a non-trivial example of the surjection $ \varphi $ from the space of triads $ \Upsilon $ to the space of connections $ \Xi $ --see \cref{rmk:onto}--, as the triad 
			 	\begin{align}
			 		\ct{m}{_{a}}&=\prn{(a^2S+N^2)}^{1/2}\cds{_{a}}\tau\ ,\\
			 		\cta{m}{_{a}}&=\prn{\frac{N^2}{S(a^2S+N^2)}}^{1/2}\cds{_{a}}p\ ,\\
			 		\cth{m}{_{a}}&=S^{1/2}\cds{_{a}}\sigma,
			 	\end{align}
			 which diagonalises the metric in the given coordinate system, is also in $ \Upsilon_{s} $ and produces the same $ \cdt{_{a}} $.
			  \\
			 
			 It is worth showing that a different triad can be defined, such that $ \ct{s}{^a_{bc}}\neq 0\neq \ct{t}{^a_{bc}} $. Keeping the $ \ct{m}{^{a}} $ of \cref{eq:m-cmetric}, now choose
			 	\begin{align}
			 		\cth{m}{^a}&=\frac{a}{N^2+a^2S}\delta^a_{\tau}+\delta^a_{p}-\frac{\sqrt{S-1}}{S}\delta^a_{\sigma}\ ,\\
			 		\cta{m}{^a}&=\frac{\sqrt{S-1}a}{a^2S+N^2}\delta^a_{\tau}+\sqrt{S-1}\delta^a_{p}+\frac{1}{S}\delta^a_{\sigma}\ .
			 	\end{align}
			 This triad is only defined for $ S>1 $, but for fixed $ a $ and $ m $ (obeying $ 2am<1 $) one can always find a region on $ \scri $ (a range of values for de coordinate $ p $) in which this condition is satisfied; enough for the purposes of this example. The new components of the shears read
			 	\begin{equation}
			 	 	\ctcna{\Sigma}{_{\mh m}}=-\ctcnh{\Sigma}{_{\ma m}}=-\frac{a}{2N\sqrt{S-1}}\dpart{p}S .
			 	\end{equation}
			 This leads to a different connection $ \cdt{_{a}}\neq\cds{_{a}} $, and can be given coordinates $ \prn{\ctcn{\Sigma}{_{\ma\mh}},	\ctcnh{\Sigma}{_{\mh m}}}=\prn{0,a\prn{2N\sqrt{S-1}}^{-1}\dpart{p}S } $.
		\subsection{Comparison with Ashtekar's phase space}
			 The investigation by Ashtekar \cite{Ashtekar81} of the radiative degrees of freedom in asymptotically flat space-times has ground differences with respect to the one put forward here. One of them, which underlies everything else, is the lightlike nature of $ \scri $ when $ \Lambda=0 $. Yet, inspired by Ashtekar's work, the present work aims at isolating the asymptotic radiative degrees of freedom by constructing a space of connections. Thus, some similarities arise and the next brief comparison between Ashtekar's phase space\footnote{Elements of Ashtekar's phase space are equivalence classes of connection arising from some restricted conformal rescaling. Here, such identification is unnecessary.} $ \Gamma $ and the space $ \Xi $ can be depicted:\\
						
						At $ \scri $ with $ \Lambda=0 $,
							\begin{enumerate}
								\item Each of the (equivalence classes of) connections in the space $ \Gamma $ determines a curvature $ \ctr{^*}{K}{^{ab}} $ ---$ \ctr{^N}{C}{^{ab}} $ in the notation of \cite{Fernandez-AlvarezSenovilla2022a}--- and a news tensor. When seen from the viewpoint of the space-time, $ \ctr{^*}{K}{^{ab}} $ is the pullback of the `magnetic' part ---w.r.t. $ \ct{N}{^a}$---  of the rescaled Weyl tensor, and the induced connection on $ \scri $ selects one of these connections.
								\item The space $ \Gamma $ has an affine structure and can be parametrised by two functions. These correspond to the 2 degrees of freedom of the radiative components of the gravitational field. In this case, due to the lightlike character of $ \scri $, they represent all the degrees of freedom of the phase space of gravitational radiation. This is supported by the radiation condition based on the vanishing of the news tensor ---or, equivalently, by the vanishing of the  asymptotic (radiant)  supermomentum $ \ct{p}{^\alpha} $  \cite{Fernandez-AlvarezSenovilla2022a}.
								\item To specify the induced connection with respect to a suitably fixed point in $ \Gamma $, it is enough to give the shear of a one-form $ \ct{\ell}{_{b}} $ that satisfies $ \ct{\ell}{_{b}}\ct{N}{^b}=-1 $ ---where small Latin indices are used, as the normal to $ \scri $ is also tangent.
							\end{enumerate}
							
						At $ \scri $ with $ \Lambda > 0 $,
							\begin{enumerate}
							\item On each neighbourhood $ U_{p}\in\scri$, there is a differential operator $ \cts{\D}{_{a}} \in \Xi$ that coincides with the Levi-Civita connection $ \cds{_{a}} $ of $ \prn{\scri,\ms{_{ab}}} $. This determines\footnote{One could also take a different perspective; namely, defining with each $ \cdt{_{a}} $ a curvature and a Cotton-York like tensor.} the Cotton-York tensor $ \ct{Y}{_{ab}} $ of the manifold. Also, from the viewpoint of the space-time, $ \ct{Y}{_{ab}} $ is the magnetic part of the rescaled Weyl tensor at $ \scri $, and $ \cds{_{a}} $, the induced connection ---in this sense, $ \cds{_{a}} $ on each $ \U_{p} $ `selects' a $ \cdt{_{a}}\in \Xi $.
							\item The space $ \Xi $ has an affine structure and can be parametrised by two functions. Due to the space-like character of $ \scri $, they cannot represent the complete set of radiative degrees of freedom in the general case. This is supported by the radiation \cref{criterionGlobal} based on the vanishing of the tangent part of the asymptotic supermomentum $ \ct{p}{^a} $ ---which involves $ \ct{D}{_{ab}} $ too. However, for algebraically special rescaled Weyl tensor at $ \scri $, these 2 degrees of freedom determine completely the gravitational radiation.
							\item To specify the induced connection with respect to an arbitrary point $ \cdt{_{a}} $ in $ \Xi $, it is enough to give the crossed components of two shears of a triad $ \cbrkt{\ct{e}{^a_{i}}} $.
							\end{enumerate} 
	\section{Discussion}
			It has been shown that an affine space $ \Xi $ of differential operators $ \cts{\D}{_{a}} $ emerges in Riemannian  manifolds from the space $ \Upsilon $ of possible triads $ \cbrkt{\ct{e}{^a_{i}}} $, such that the action of each $ \cts{\D}{_{a}} $ on arbitrary tensor fields depends on the action of a triplet of two-dimensional connections $ \cbrkt{\cdcn{_{A}},\cdcna{_{A}},\cdcnh{_{A}}} $ and an antisymmetric term $ \ct{t}{^a_{bc}} $ containing the vorticities of $ \cbrkt{\ct{e}{^a_{i}}} $. Also, that there is a surjective map $ \varphi $ from $ \Upsilon $ to   $ \Xi $. Remarkably, for $ C^\infty $ Riemannian metrics, the Levi-Civita connection belongs to $ \Xi $, and is fully determined by a triplet  $ \cbrkt{\cdcn{_{A}},\cdcna{_{A}},\cdcnh{_{A}}} $; hence, the subset $ \Upsilon_{s}\in\Upsilon $ of triads that have $ \ct{t}{^a_{bc}}=0 $ and are mapped to $ \cds{_{a}} $ is not empty. Apart from that, points in $ \Xi $ can be labelled by 2 functions which have been justified to represent half of the degrees of freedom of the radiative gravitational field at infinity with a positive cosmological constant. Further investigation has to be carried out to extract all the possible structure in $ \Xi $.\\
			
			One way of understanding the spirit of this work comes from the dimension and causal character of $ \scri $. It can be argued that local methods do not suffice to determine the radiative degrees of freedom of gravity ---e.g., see discussion in chapter 9 of \cite{Penrose1986}. Also, it is reasonable to think of gravitational radiation as linked to two-dimensional submanifolds. As an example, the theorems for the existence of news tensors ---either the one of the asymptotically flat case or the first component of news with $ \Lambda>0 $--- depend on two-dimensional cuts  ---e.g., see how the dimension intervenes in the proof of Corollary 5.2 in \cite{Fernandez-AlvarezSenovilla2022b}. Additionally, observables such as the energy-momentum or mass-loss formula involve integration over a two-dimensional surface, which also relates to topology: when the cosmological constant vanishes, $ \scri $ has $ \mathbb{R}\times\mathbb{S}^2 $ topology \cite{NewmanRPAC1989} and the null generators naturally provide the conformal boundary with a $ 1+2 $ decomposition. For $ \Lambda>0 $, however, the topology of $ \scri $ is not unique \cite{Mars2017,Ashtekar2014} and, in general, there is no natural choice of an intrinsic evolution direction equipping infinity with a $ 1+2 $ splitting. The space of triads $ \Upsilon$ represents this directional freedom and induces a space of differential operators $ \Xi $. From it, the geometry of the two-dimensional projected surfaces associated with a triad is used to show that the Levi-Civita connection belongs to $ \Xi $, in a way that the curvature of $ \scri $ can be determined by two-dimensional connections ---see  \cref{rmk:levi-civitta}. Indeed, this result ---\cref{thm:corollary-levi-civita}--- depends critically on the dimension, and generalisations of \cref{thm:classic} to dimension greater than 3 require restrictions on the curvature \cite{Tod1992}. \\
			 
			On a different note, one of the big pieces missing is how to account for the total number of degrees of freedom in the general case ---that is, when $ \ct{d}{_{\alpha\beta\gamma}^\delta} $ is algebraically general at $ \scri $. A possible way out is to construct a map from one detached abstract Riemannian manifold to another, such that the image of $ \ct{Y}{_{ab}} $ on the second manifold is identified with $ \ct{D}{_{ab}} $. The intuition behind this proposal is the following: the electric and magnetic parts of the rescaled Weyl tensor can be expressed as
			 \begin{align}
			 	\ct{C}{_{ab}}&= \sqrt{\frac{3}{\Lambda}}\brkt{\ct{\epsilon}{^{pq}_a}\cds{_{[p}}\ctd{\sigma}{_{q]b}}+\frac{1}{2}\ct{\epsilon}{^p_{ba}}\cds{_c}\ctd{\sigma}{^c_p}}\spacef,\label{eq:magneticShear}\\
				\ct{D}{_{ab}}&=\frac{1}{2}\sqrt{\frac{3}{\Lambda}}\ct{\ddot{\sigma}}{_{ab}}\spacef,\label{eq:electricShear}
			 \end{align}
			 with $ \ct{\sigma}{_{ab}} $ being the shear of $ \ct{n}{^\alpha} $, and where the dot denotes covariant differentiation along this vector field. This suggests the possibility of identifying the missing pair of degrees of freedom as the evaluation at $ \scri $ of a time derivative of the other pair  ---or as a map between detached three-dimensional Riemannian manifolds. Then, the two pairs would represent the 4 degrees of freedom of the \emph{phase space} of gravitational radiation with $ \Lambda>0 $. This matter has to be tackled rigorously elsewhere.\\
			 
			 Yet another important issue is how asymptotic symmetries come into play, and how they preserve/change the structure of $ \Xi $. As an example, distances between points of $ \Xi $ are left invariant by conformal transformations of the Riemannian metric ---in the sense that operators $ \cdt{_{a}} $ change, but the labels $ \ct{s}{^a_{bc}} $ do not---, see \cref{eq:gauge-distance-xi}. Thus, it is to be expected that basic infinitesimal asymptotic symmetries, i.e., those generated by CKVF $ \ct{\xi}{^a} $ of $ \prn{\scri ,\ms{_{ab}}} $ that satisfy
			 	\begin{equation}
					\lied_{\vec{\xi}}\ct{D}{_{cd}}=-\frac{1}{3}\cds{_a}\ct{\xi}{^a} \ct{D}{_{cd}}\spacef,
				\end{equation}
			 act simply and transitively on $ \Xi $. Additionally, it is necessary to understand the interplay between translations, the first piece of news associated with a congruence of curves \cite{Fernandez-AlvarezSenovilla2022b} and $ \Xi $ ---\cref{thm:news-triads} goes in that direction.
			 \\
			 			 
			 The final goal is to identify a complete phase space for the radiative gravitational field at $ \scri $ with $ \Lambda>0 $ that could be used to shed light on open problems, such as the formulation of the total energy-momentum carried away by gravitational waves in an accelerating expanding universe. Hereby, this work is just a first step on that itinerary; other important pieces, as the ones pointed out above, are yet to be fit into the jigsaw. 
			
			\subsection*{Acknowledgments}

					The paper was completed during a one-year stay at Queen Mary University of London. The author acknowledges the hospitality from the Geometry, Analysis and Gravitation research group and, specially, from Juan A. Valiente Kroon. Also, he is very grateful to Iñaki Garay for encouragement and interesting discussions during the early stages of this work, and to  José M. M. Senovilla for valuable comments, including suggestions on the first draft of the manuscript. Work supported under Grant Margarita Salas MARSA22/20 (Spanish Ministry of Universities and European Union), financed by European Union -- Next Generation EU.

\printbibliography
\end{document}